\documentclass[aps,pra,10pt,tightenlines,longbibliography,notitlepage,amsmath,twocolumn,floatfix,superscriptaddress]{revtex4-1}
\usepackage[dvipsnames]{xcolor}
\usepackage[colorlinks=true,hypertexnames=false,bookmarks=false,linkcolor=NavyBlue,urlcolor=NavyBlue,citecolor=NavyBlue,breaklinks]{hyperref}
\usepackage{amsmath,amsfonts,amssymb,amsthm}
\usepackage{mathtools}
\usepackage{thmtools}
\renewcommand\thmcontinues[1]{\textbf{continued}}
\usepackage{tabularx}
\usepackage{graphicx}
\usepackage{physics}
\usepackage[T1]{fontenc}
\usepackage{newtxtext,inconsolata}
\usepackage{enumitem}
\usepackage{setspace}
\usepackage{verbatimbox}
\usepackage{centernot}

\newtheorem{theorem}{Theorem}

\newtheorem{lemma}{Lemma}

\newcommand{\bk}[1]{\qty(#1)}
\newcommand{\Bk}[1]{\qty[#1]}
\newcommand{\BK}[1]{\qty{#1}}
\newcommand{\mc}[1]{\mathcal{#1}}
\newcommand{\mb}[1]{\mathbb{#1}}
\newcommand{\ms}[1]{\mathsf{#1}}

\renewcommand{\norm}[1]{\left\Vert #1 \right\Vert}

\renewcommand{\bra}[1]{\langle #1 |}
\renewcommand{\ket}[1]{| #1 \rangle}

\newcommand{\Hel}{\ms{Hel}}

\newcommand{\el}{\omega}

\newcommand{\error}{\mathrm{error}}

\newcommand{\HN}{\ms{HN}}

\newcommand{\cgraphic}[2]{\centerline{\includegraphics[width=#1\textwidth]{#2}}}
\newcommand{\fig}[3]{
\begin{figure}[htbp!]
\cgraphic{#1}{#2}
\caption{\label{#2}#3}
\end{figure}
}

\begin{document}

\title{Approaching the ultimate limit of quantum multiparameter estimation
by many-body physics}

\author{Mankei Tsang}
\email{mankei@nus.edu.sg}
\homepage{https://blog.nus.edu.sg/mankei/}
\affiliation{Department of Electrical and Computer Engineering,
  National University of Singapore, 4 Engineering Drive 3, Singapore
  117583}

\affiliation{Department of Physics, National University of Singapore,
  2 Science Drive 3, Singapore 117551}

\date{\today}

\begin{abstract}
  I propose a physical measurement scheme on multiple independent and
  identically distributed quantum objects to approach the
  Holevo--Nagaoka bound for quantum multiparameter estimation.  The
  scheme entails a physical interaction of the objects with bosonic
  ancillas, followed by a general-dyne measurement of the ancillas.
  The proposal offers a more concrete description of the experimental
  setup needed to achieve the ultimate precision limit set by the
  bound.
\end{abstract}

\maketitle
The estimation of multiple parameters of quantum objects, such as
optical fields or atomic spins, is a fundamental task in quantum
metrology, with important applications in sensing and imaging
\cite{helstrom,degen17,review_cp,lvovsky26,liu19,demkowicz20,pezze25}. Many
Cram\'er--Rao-type bounds have been proposed as fundamental limits to
the estimation precision \cite{helstrom,demkowicz20}. Among the
bounds, the so-called Holevo--Nagaoka bound for $N$ independent and
identically distributed (IID) objects
\cite{holevo_aspect,nagaoka89,hayashi24}, also called the Holevo
Cram\'er--Rao bound, can be regarded as the ultimate quantum limit that
is, in theory, achievable asymptotically by a collective measurement.
The most general measurement method to achieve the bound is called the
two-step method \cite{kahn09,gill_guta,yamagata13,yang19}: First, some
negligible number of objects are measured to give a preliminary
estimate $\check\theta$ of the parameter $\theta$. Second, a set of
collective observables $X^{(M)}$ of the remaining $M\approx N$ objects
are derived from $\check\theta$, such that a certain measurement of
$X^{(M)}$ can be shown to asymptotically attain the Holevo--Nagaoka
bound. While both steps are nontrivial, the second step is arguably
more difficult, since there is no known physical setup in general that
can measure a set of possibly incompatible observables $X^{(M)}$ to
the desired precision. In the limit of $M \to \infty$, $X^{(M)}$ can
be shown to approach bosonic quadrature operators by the quantum
central limit theorem (QCLT) \cite{ohya,verbeure}, but the limit is
mathematical---$X^{(M)}$ for any finite $M$ may not be bosonic
quadratures that can be easily measured in practice. ``Almost
experimentally infeasible'' is how one recent work describes the
measurement \cite{zhou26}.

In recent years, significant effort has been devoted to finding good
measurements for quantum multiparameter estimation---see, for example,
\cite{matsumoto02,twc,vidrighin,genoni16,tnl,review_cp,lvovsky26,parniak18,hou18,demkowicz20,albarelli20a,ansari21,chen22,chen24,pelecanos25,zhou26}
and references therein---but the proposals so far have been limited to
special cases, such as pure states or low-rank states
\cite{matsumoto02,zhou26}, or they involve nontrivial numerical
methods, unclear experimental implementations, or complicated quantum
circuits. The purpose of this paper is to outline a general physical
scheme that can measure the collective observables $X^{(M)}$ optimally
in the second step of the two-step method. The idea is simple: since
$X^{(M)}$ are physical observables, a Hamiltonian that couples them to
bona-fide bosonic ancillas, such as optical modes, is physical as
well. The design of the interaction can be facilitated by the QCLT,
which enables one to approximate $X^{(M)}$ by bosonic quadratures
$X^{(\infty)}$ for large $M$. It is then straightforward to write down
a Hamiltonian that can faithfully transduce the statistics of
$X^{(\infty)}$ to the ancilla quadratures. After the interaction, the
ancilla quadratures, being physically bosonic, can be measured more
easily.

Let the Hilbert space for a quantum object be $\mc H$ and its state be
the density operator $\rho(\theta)$ on $\mc H$. Let $\rho(\theta)$ be
a function of the unknown parameter $\theta \in \Theta$ in a parameter
space $\Theta$. The state of $N$ IID copies is
$\rho(\theta)^{\otimes N}$ on $\mc H^{\otimes N}$. Let
$\beta:\Theta \to \mb R^n$ be a vectoral parameter of interest to be
estimated. Let the probability measure for a measurement of the $N$
objects be $P_N(\cdot|\theta)$ and an estimator be
$\check\beta^{(N)}(\el)$ in terms of the outcome $\el$.  Define the
estimation error matrix as
$V^{(N)} \equiv \int \bk{\check\beta^{(N)}-\beta}
\bk{\check\beta^{(N)}-\beta}^\top dP_N$, where $\beta$ and
$\check\beta^{(N)}$ are assumed to be column vectors, $\top$ denotes
the transpose, and all quantities are implicitly assumed to be
evaluated at the same $\theta$ called the true parameter. Define the
average error as
$\error_N(\theta) \equiv \trace\Bk{W V^{(N)}(\theta)}$, where $\trace$
denotes the trace and $W$ is a positive-definite matrix that defines
the weights in the averaging.  Under certain asymptotic conditions on
the estimator, the asymptotic normalized error can be bounded as
\cite{yang19}
\begin{align}
\lim_{N\to\infty} N \error_N(\theta) \ge  \HN(\theta)
\equiv \inf_{\delta \in \mc D} C(\delta),
\label{HN}
\\
 C(\delta) \equiv \trace\Bk{W \Re\Gamma(\delta)}
 + \trace\Bk{\abs{\sqrt{W}\Im\Gamma(\delta)\sqrt{W}}},
\label{HN2}
\\
\Gamma_{jk}(\delta) \equiv \trace\bk{\delta_j\delta_k \rho},
\end{align}
where $\HN$ denotes the Holevo--Nagaoka bound,
$|A| \equiv \sqrt{A^\dagger A}$, $\dagger$ denotes the conjugate
transpose, $\Re$ and $\Im$ denote the entry-wise real and
imaginary parts, respectively,
$\delta \equiv \mqty(\delta_1&\dots&\delta_n)^\top \in \mc D$ is
called a (vectoral) influence operator, and $\mc D$ denotes the set of
all influence operators \cite{semi_prx}. Each $\delta_j$ is a
self-adjoint operator on $\mc H$ that satisfies
$\trace(\delta_j\rho) = 0$ and
$\trace(\delta_j\dot\phi\rho) = \dot\phi \beta_j$ for any parametric
submodel $\phi \in \Phi(\theta)$. $\phi$ and $\dot\phi$ are defined as
follows: Each $\phi$ is a curve in the parameter space
$\phi:(-\epsilon,\epsilon) \to \Theta$ for some $\epsilon > 0$ that
satisfies $\phi(0) = \theta$, while $\dot\phi$ denotes its directional
derivative at the true $\theta$ such that
$\dot\phi f \equiv \dv*{f(\phi(t))}{t}|_{t=0}$. (See
\cite[Sec.~III]{qlmoment_pra2} for the precise definition of the set
of parametric submodels $\Phi(\theta)$.)

A looser bound called the generalized Helstrom bound is defined as
\cite{semi_prx}
\begin{align}
\Hel &\equiv \inf_{\delta \in \mc D} \trace\Bk{W \Re\Gamma(\delta)},
\label{helstrom}
\end{align}
which coincides with $\HN$ for a scalar $\beta$. For many problems, an
influence operator $\delta^\Hel$ that achieves the infimum in
Eq.~(\ref{helstrom}) exists and is called the efficient influence.
$\Hel$ coincides with Helstrom's version of the quantum Cram\'er--Rao
bound \cite{helstrom} when the parameter space $\Theta$ is
finite-dimensional \cite{fujiwara05,semi_prx}. It is known that
\cite[Theorem~9]{semi_prx} (see also \cite{carollo19,*carollo20})
\begin{align}
\Hel &\le \HN \le C(\delta^\Hel) \le 2\Hel.
\label{bounds}
\end{align}
It is remarkable that $\HN/\Hel \le 2$ and $\HN$ is asymptotically
achievable for any number $n$ of parameters in $\beta$, even though
one may suspect that the issue of measurement incompatibility would
make the estimation problem harder for increasing $n$.

I now briefly review the two-step method to approach the
Holevo--Nagaoka bound; see \cite{gill_guta,yamagata13,yang19} for
precise details. Assume that an optimal influence operator $\delta$
achieving the infimum in Eq.~(\ref{HN}) exists. In the first step that
I will call the acquisition step, some negligible number of objects
are measured to obtain a preliminary estimate $\check\theta$ of
$\theta$, analogous to the initial acquisition stage of a phase-locked
loop \cite{vantrees2,wiseman_milburn}. $\check\theta$ should be
accurate enough so that one can find a set of observables closely
approximating
\begin{align}
X(\theta) &\equiv \delta(\theta) + \Bk{\beta(\theta)-c} I
\end{align}
for some $c \in \mb R^n$, where $I$ is the identity operator, and also
a matrix closely approximating $\Gamma[\delta(\theta)]$. For example,
under the condition of quantum local asymptotic normality, it can be
shown that
$\delta(\check\theta)\approx \delta(\theta) +
[\beta(\theta)-\beta(\check\theta)] I$ and
$\Gamma[\delta(\check\theta)]$ at $\rho(\check\theta)$ is close to
$\Gamma[\delta(\theta)]$ at $\rho(\theta)$, so one can set
$X = \delta(\check\theta)$ with $c = \beta(\check\theta)$ and
$\Gamma = \Gamma[\delta(\check\theta)]$. Define the collective form of
$X_j$ for the remaining $M \approx N$ objects as
\begin{align}
X_j^{(M)} &\equiv \frac{1}{\sqrt{M}} \sum_{k=1}^M I^{\otimes(k-1)}\otimes
 X_j \otimes I^{\otimes(M-k)}.
\label{XM}
\end{align}
By the QCLT \cite{ohya,verbeure}, the quantum state with respect to
$X^{(M)}-\sqrt{M}[\beta(\theta)-c]I$ converges to a zero-mean Gaussian
state with covariance matrix $\Re\Gamma(\delta)$ in the limit of
$M \to \infty$. Moreover,
\begin{align}
\Bk{X_j^{(M)},X_k^{(M)}} &\to 
\Bk{X_j^{(\infty)},X_k^{(\infty)}} = \Bk{2i \Im \Gamma_{jk}(\delta)} I,
\label{ergodic}
\end{align}
such that $X^{(\infty)}$ become bosonic quadrature operators.  Another
name for the QCLT is bosonization, a widely used method to simplify
many-body models by approximating collective observables as bosonic
quadratures \cite{verbeure,larson}.  In the second step of the
two-step method that I will call the Gaussian measurement step,
$X^{(M)}$ can be measured together with some bosonic ancilla
quadratures, such that the outcomes are well approximated by Gaussian
random variables with mean vector $\sqrt{M}[\beta(\theta)-c]$. An
appropriate Gaussian ancilla state can lead to the optimal asymptotic
error $N \error_N(\theta) \to C(\delta) = \HN$
\cite{gill_guta,demkowicz20}.

Assuming $W = I$, a more straightforward approach to the Gaussian
measurement step is to split the $M$ objects into $n$ batches and use
each batch to estimate only one $\beta_j$. By measuring
$X_j=\delta_j^\Hel + (\beta_j-c_j) I$ of each object in the $j$th
batch, this separable measurement achieves an error of
$(M/n)V_{jj}^{(M/n)} \to \trace[(\delta^\Hel_j)^2 \rho]$ for that
$\beta_j$, so the average error would be
$N \error_N(\theta) \to n \Hel$.  For a low $n$, this error may be
acceptable, but for $n \gg 2$, a more nontrivial measurement is
necessary to achieve the much lower Holevo--Nagaoka bound
$\HN \ll n\Hel$.

It is usually easier to find the efficient influence $\delta^\Hel$
that achieves the generalized Helstrom bound $\Hel$ than to find the
$\delta$ that achieves the Holevo--Nagaoka bound $\HN$.  Once
$\delta^\Hel$ is found from the acquisition step, one can follow the
same Gaussian measurement step mentioned earlier with
$X = \delta^\Hel + (\beta-c) I$ to achieve an asymptotic normalized
error $C(\delta^\Hel)$. Eq.~(\ref{bounds}) implies that
$\HN \le C(\delta^\Hel) \le 2\Hel \le 2 \HN$, so $\delta^\Hel$ is
adequate if the difference by at most a factor of 2 is acceptable.
$\delta^\Hel$ is always a real linear combination of the so-called
score operators $\{S(\dot\phi):\phi \in \Phi(\theta)\}$, also called
symmetric logarithmic derivatives, defined by
$\dot\phi \rho = [\rho S(\dot\phi)+S(\dot\phi)\rho]/2$
\cite{semi_prx,fujiwara05}.  One example is thermal-light sensing and
imaging \cite{helstrom,tnl,review_cp,lvovsky26}.  Given a thermal
source and any passive linear optics, the quantum state of the $m$
received optical modes has a zero-mean Gaussian Glauber--Sudarshan
representation with a covariance matrix given by the optical mutual
coherence matrix; let this matrix be
$\Upsilon:\Theta \to \mb C^{m\times m}$. The scores all have the
quadratic form
$S(\dot\phi) = \sum_{\mu,\nu} R_{\mu\nu}(\dot\phi) a_\mu^\dagger a_\nu
- \trace[R(\dot\phi) \Upsilon] I$ \cite[Eq.~(6.14),
p.~283]{helstrom}, where $\BK{a_\mu:\mu \in \Omega}$ are the
annihilation operators of the optical modes with mode indices $\Omega$
and the Hermitian matrix $R(\dot\phi) \in \mb C^{m\times m}$ is
determined by
$\dot\phi\Upsilon = [\Upsilon R(\dot\phi) \bk{I + \Upsilon} + \bk{I +
  \Upsilon} R(\dot\phi) \Upsilon]/2$ \cite[Eq.~(6.19),
p.~284]{helstrom}.  It follows that $\delta^\Hel$ also has the same
quadratic form with respect to $\BK{a_\mu}$, that is,
\begin{align}
\delta^\Hel_j &= 
\sum_{\mu,\nu} D_{j\mu\nu} a_\mu^\dagger a_\nu -\trace\bk{D_j\Upsilon} I
\label{effinf_thermal}
\end{align}
for some Hermitian $D_j \in \mb C^{m\times m}$.  In particular, for
diffraction-limited imaging of incoherent sources
\cite{tnl,review_cp}, $\Upsilon$ can be assumed to be real for all
$\theta$.  Then, using the technique in \cite{miyazaki22},
Appendix~\ref{app_imaging} shows that $\Im\Gamma(\delta^\Hel) = 0$ for
all $\theta$, meaning that $\Hel = \HN = C(\delta^\Hel)$ and
$\delta^\Hel$ also achieves the infimum in Eq.~(\ref{HN}) for
$\HN$. Another example is the estimation of the expected values
$\beta_j(\theta) = \trace\Bk{b_j\rho(\theta)}$ of a set of observables
$b = \mqty(b_1&\dots&b_n)^\top$.  Assuming the nonparametric model,
where $\rho(\theta) = \theta$ and $\Theta$ is the set of all density
operators, it can be proved that $\delta^\Hel = b - \beta I$
\cite{semi_prx}, so one can set
\begin{align}
X &= \delta^\Hel + \beta I = b.
\label{nonparam}
\end{align}
For the nonparametric model, it can in fact be proved that there is
only one influence operator $\delta = \delta^\Hel$ in $\mc D$
\cite{semi_prx} (in terms of the equivalence classes described in
\cite{holevo77}), so $\delta^\Hel$ must also be optimal for $\HN$.

While the collective observables $X^{(M)}$ tend to bosonic quadratures
mathematically, $X^{(M)}$ for any finite $M$ may not be bosonic
quadratures physically. In the examples above, one finds that the
single-object observables $X$ may be observables of nonbosonic
objects, or even if the objects are bosonic, $X$ may not be their
quadratures. I now outline an indirect measurement of $X^{(M)}$ that
is the chief result of this work. First, I transform $X^{(M)}$ to a
more convenient set of observables. Since $\Im\Gamma(\delta)$ is a
real and antisymmetric matrix, there exists an orthogonal matrix
$O \in \mb R^{n\times n}$ such that \cite[Corollary~2.5.11]{horn}
\begin{align}
O^\top \Im\Gamma(\delta) O &= 
\Bk{\bigoplus_{j=1}^r \mqty(0 & \nu_j\\ -\nu_j & 0) }\oplus 0,
\label{ImG}
\end{align}
where $\BK{\nu_j}$ are all positive and the last $0$ denotes a zero
matrix of appropriate size ($(n-2r)\times(n-2r)$ here). Then there
exists an invertible matrix $L \in \mb R^{n\times n}$ such that
\begin{align}
X^{(M)} &= L Y^{(M)},
\label{L}
&
Y^{(M)} &= 
\Bk{\bigoplus_{j=1}^r \mqty(q_j^{(M)}\\ p_j^{(M)})}
\oplus \mqty(x_1^{(M)} \\ \vdots \\ x_{n-2r}^{(M)}),
\end{align}
where $Y^{(M)}$ satisfy the asymptotic commutation relations
\begin{align}
\Bk{q_j^{(M)},p_k^{(M)}} &\to i\delta_{jk} I,
&
\Bk{Y_j^{(M)},Y_k^{(M)}} &\to 0\ \textrm{otherwise}.
\label{Ycompat}
\end{align}
Each $Y_j^{(M)}$ is in the same collective form as Eq.~(\ref{XM}) in
terms of a single-object observable $Y_j = \sum_k (L^{-1})_{jk} X_k$.
Eqs.~(\ref{ImG})--(\ref{Ycompat}) mean that $X^{(M)}$ can always be
transformed to a set of observables $Y^{(M)}$ that comprise $r$ pairs
of canonical quadratures
$\bigoplus_{j=1}^r \mqty(q_j^{(\infty)} & p_j^{(\infty)})^\top$ and
$n-2r$ classical variables
$\mqty(x_1^{(\infty)} & \dots & x_{n-2r}^{(\infty)})^\top$ in the
asymptotic limit. Each $q_j^{(\infty)}$ or $p_j^{(\infty)}$ is
incompatible only with its conjugate observable and commutes with all
the other entries of $Y^{(\infty)}$, while each $x_j^{(\infty)}$
commutes with all the other entries of $Y^{(\infty)}$ and is thus
effectively classical. This high degree of asymptotic compatibility
within $Y^{(M)}$ is the fundamental reason that the Holevo--Nagaoka
bound is achievable, even though the single-object observables $X$ may
seem highly incompatible with one another and difficult to measure
simultaneously.  An optimal measurement of $X^{(\infty)}$ satisfying
$N\error_N(\theta) \to C(\delta)$ can be implemented by an appropriate
measurement of $Y^{(\infty)}$ \cite[Sec.~3.2]{demkowicz20}.
Remarkably, for diffraction-limited incoherent imaging, where
$\Im\Gamma(\delta) = 0$ for all $\theta$, all entries of $X^{(M)}$ are
asymptotically compatible with one another and one can simply set
$X^{(M)} = Y^{(M)} = x^{(M)}$.

Next, I consider an indirect measurement of
$\bigoplus_{j} \mqty(q_j^{(M)}& p_j^{(M)})^\top$. Assume the
Hamiltonian
\begin{align}
H_1^{(M)} &= \kappa \sum_{j=1}^r \bk{q_j^{(M)}  p_j' - p_j^{(M)} q_j'},
\label{BS}
\end{align}
where $\kappa > 0$ and $(q_j',p_j')$ are canonical quadratures of a
bona-fide bosonic ancilla mode, such that each ancilla interacts with
the collective observables $(q_j^{(M)},p_j^{(M)})$ of the $M$
objects. Approximating $H_1^{(M)}$ by $H_1^{(\infty)}$---which is a
``beam-splitter'' Hamiltonian---and taking $\kappa t = \pi/2$ (with
$\hbar = 1$), I find, in the Heisenberg picture,
\begin{align}
q_j'(t) &= q_j^{(\infty)},
&
p_j'(t) &= p_j^{(\infty)}.
\label{swap}
\end{align}
Now one can measure the ancilla quadratures at time $t$; they have the
same Gaussian statistics as
$\bigoplus_j \mqty (q_j^{(\infty)} & p_j^{(\infty)})^\top$ of the
original objects.  If the ancillas are optical modes, the optimal
measurement is a general-dyne measurement \cite{genoni16}, which
requires $r$ additional optical ancilla modes. Ref.~\cite{sm} proves
that the required state of these additional ancillas is always
physical, and if $W = I$, the state is the vacuum state, meaning that
the general-dyne measurement is simply heterodyne.

Lastly, I consider an indirect measurement of
$x^{(M)} = \mqty(x_1^{(M)}&\dots&x_{n-2r}^{(M)})^\top$. If a set of
conjugate observables
$y^{(M)} = \mqty(y_1^{(M)}&\dots& y_{n-2r}^{(M)})^\top$ can be found
for the quantum objects such that
$[x_j^{(M)},y_k^{(M)}] \to i \delta_{jk} I$,
$[y_j^{(M)},y_k^{(M)}] \to 0$, and $[y_j^{(M)},Y_k^{(M)}] \to 0$ with
all the other entries of $Y^{(M)}$, then $(x_j^{(M)},y_j^{(M)})$
approach canonical quadratures of a bosonic mode in the same manner as
$(q_j^{(M)},p_j^{(M)})$, and a beam-splitter Hamiltonian in the form
of Eq.~(\ref{BS}) can still be used to transfer the statistics of
$x^{(M)}$ to some ancilla quadratures $q''$.  Alternatively, consider
the Hamiltonian
\begin{align}
H^{(M)} &= H_1^{(M)} + H_2^{(M)},
&
H_2^{(M)} &= \gamma \sum_{j=1}^{n-2r} x_j^{(M)}  p_j'',
\label{disp}
\end{align}
where $\gamma \in \mb R$ and $(q_j'',p_j'')$ are again canonical
quadratures of a bosonic ancilla mode. Approximating $H^{(M)}$ as
$H^{(\infty)}$, I find
\begin{align}
q_j''(t) &=  q_j'' + \gamma t x_j^{(\infty)},
\end{align}
such that each ancilla quadrature $q_j''(t)$ is displaced by the
object observable $x_j^{(\infty)}$. This type of interaction has been
extensively studied in the context of quantum-nondemolition (QND)
measurements \cite{braginsky,grangier98}.  As long as the initial mean
of $q''$ is known and the initial covariance matrix of $q''$ is
negligible relative to $(\gamma t)^2$ times the covariance matrix of
$x^{(\infty)}$, one can measure $q''(t)$ of the ancillas as an
indirect measurement of $x^{(\infty)}$ with the desired mean and
covariances.

Fig.~\ref{manybody_scheme3} illustrates the complete scheme to measure
$Y^{(M)}$.  To obtain the desired measurement of $X^{(M)}$, simply
multiply the column vector of outcomes by the $L$ matrix in
Eq.~(\ref{L}). $n$ ancilla modes are needed in total---it is good news
that this number does not increase with the number of objects $M$.
The coupling coefficients $\kappa$ and $\gamma$ in Eqs.~(\ref{BS}) and
(\ref{disp}) also scale favorably with $M$: Let the physical coupling
coefficients for one object ($M = 1$) be $\kappa = \kappa_1$ and
$\gamma = \gamma_1$; the effective coupling coefficients for $M$
objects are then given by $\kappa = \sqrt{M}\kappa_1$ and
$\gamma = \sqrt{M}\gamma_1$.

\fig{0.45}{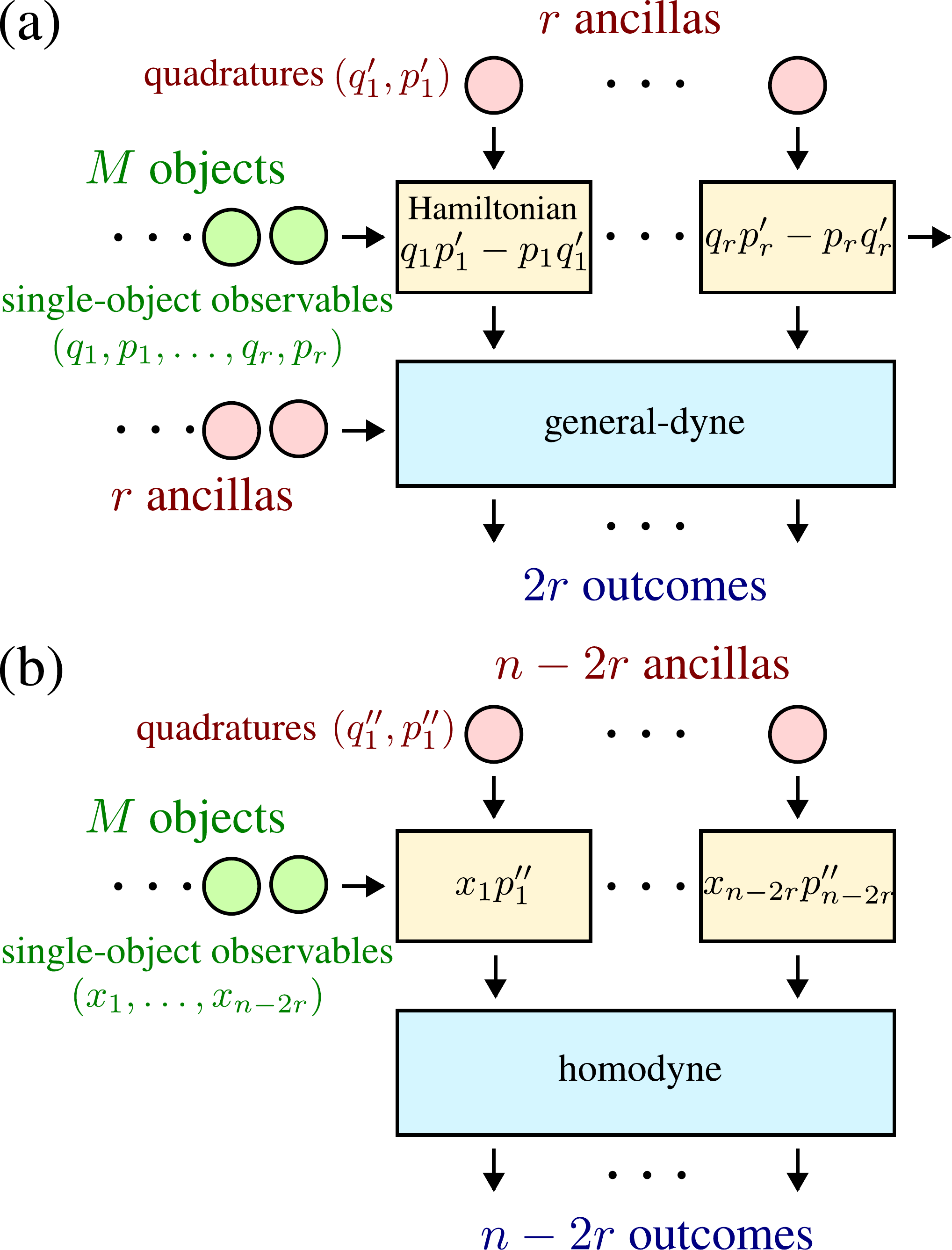}{Schematics of the proposed
    measurement.  $Y = (q_1,p_1,\dots,x_1,\dots)$ are the
    single-object observables derived from $Y = L^{-1} X$ using the
    $L$ matrix in Eq.~(\ref{L}).  Each object interacts with (a) an
    ancilla with quadratures $(q_j',p_j')$ as per the Hamiltonian
    $\propto q_jp_j'-p_j q_j'$ and (b) an ancilla with quadratures
    $(q_j'',p_j'')$ as per the Hamiltonian $\propto x_j p_j''$, such
    that the total Hamiltonian for $M$ objects is given by
    Eqs.~(\ref{BS}) and (\ref{disp}). Assuming that the ancillas are
    optical modes, a general-dyne measurement is then performed on the
    $r$ ancillas in (a) with $r$ additional ancillas, while a homodyne
    measurement is performed on the $n-2r$ ancillas in (b).}

The essential feature of the proposed scheme is that the Hamiltonians
are expressed in terms of physical observables
$Y^{(M)} = L^{-1} X^{(M)}$ of the quantum objects, so that it becomes
clearer how the dynamics can be implemented in reality. For example,
if each $Y_j$ is a spin observable, then $Y_j^{(M)}$ is a collective
spin observable of the $M$ objects, the Hamiltonians resemble those of
the Dicke and Tavis--Cummings models, and the QCLT-based approximation
resembles the Holstein--Primakoff approximation
\cite{verbeure,larson,hammerer10}. For the thermal-light example,
$X^{(M)}$ and $Y^{(M)}$ in terms of Eq.~(\ref{effinf_thermal}) are
quadratic with respect to optical annihilation and creation operators,
meaning that the Hamiltonian $H^{(M)}$ is cubic with respect to
bosonic field operators. This cubic Hamiltonian implies that, in the
context of optics, only a three-wave-mixing nonlinearity (or
multi-wave mixing with strong pumps) \cite{drummond,firstenberg16} is
needed to implement the interactions---the order of the nonlinearity
fortunately does not increase with
$M$. Appendix~\ref{app_imaging} further describes some concrete
  implementations of the measurement for incoherent imaging.

The $M \to \infty$ approximation is crucial for the design of the
Hamiltonians and the proof that the dynamics work as desired, free of
any unnecessary measurement-backaction noise \cite{braginsky}.
Another viewpoint is to regard the approximation as a linearization of
the equations of motion and error propagation; the replacement of
$H^{(M)}$ by the bosonic quadratic $H^{(\infty)}$ is simply a more
rigorous approach based on the QCLT at the Hamiltonian level.
Appendix~\ref{app_spin} works out a simple spin example to show that
linearization gives the same result. The success of the bosonization
method in many-body physics \cite{verbeure,larson,hammerer10} suggests
that it should work well for a large enough $M$.

The proposed scheme, though conceptually simple and scalable, is
imperfect in many ways. First of all, the scheme does not implement
the acquisition step. For an infinite-dimensional $\mc H$, it is not
even known if the acquisition step can always achieve the required
accuracy, or under what condition it can do so.  I stress, however,
that many problems may not require the step
\cite{demkowicz20,tnl,review_cp}; for example, in astronomical
imaging, ample prior information about the stellar objects may already
exist.  For problems that do require acquisition, a viable method may
be to use the same setup for both steps with adaptive
measurement-based feedback \cite{wiseman_milburn}. Second, it is
unclear how large $M$ should be to make the QCLT-based approximation
satisfactory, although the achievability of the Holevo--Nagaoka bound
is proved using the same approximation in the first place and thus
shares the same question. Third, the scheme does not rule out more
clever methods that may work better for finite $M$ and special cases
\cite{matsumoto02,twc,vidrighin,genoni16,tnl,review_cp,lvovsky26,parniak18,hou18,demkowicz20,albarelli20a,ansari21,chen22,chen24,pelecanos25,zhou26}.
Finally, even if all the approximations are valid, the proposed
Hamiltonians may still be challenging to implement experimentally. The
experimental effort may nonetheless be rewarding: compared with the
easier separable measurements \cite{gill_massar,conlon21}, the
improvement due to collective measurements can be substantial when
$\beta$ contains many parameters \cite{belliardo21}.

An insightful discussion with Rafa{\l} Demkowicz-Dobrza\'nski is
gratefully acknowledged.


\begin{thebibliography}{53}%
\makeatletter
\providecommand \@ifxundefined [1]{%
 \@ifx{#1\undefined}
}%
\providecommand \@ifnum [1]{%
 \ifnum #1\expandafter \@firstoftwo
 \else \expandafter \@secondoftwo
 \fi
}%
\providecommand \@ifx [1]{%
 \ifx #1\expandafter \@firstoftwo
 \else \expandafter \@secondoftwo
 \fi
}%
\providecommand \natexlab [1]{#1}%
\providecommand \enquote  [1]{``#1''}%
\providecommand \bibnamefont  [1]{#1}%
\providecommand \bibfnamefont [1]{#1}%
\providecommand \citenamefont [1]{#1}%
\providecommand \href@noop [0]{\@secondoftwo}%
\providecommand \href [0]{\begingroup \@sanitize@url \@href}%
\providecommand \@href[1]{\@@startlink{#1}\@@href}%
\providecommand \@@href[1]{\endgroup#1\@@endlink}%
\providecommand \@sanitize@url [0]{\catcode `\\12\catcode `\$12\catcode
  `\&12\catcode `\#12\catcode `\^12\catcode `\_12\catcode `\%12\relax}%
\providecommand \@@startlink[1]{}%
\providecommand \@@endlink[0]{}%
\providecommand \url  [0]{\begingroup\@sanitize@url \@url }%
\providecommand \@url [1]{\endgroup\@href {#1}{\urlprefix }}%
\providecommand \urlprefix  [0]{URL }%
\providecommand \Eprint [0]{\href }%
\providecommand \doibase [0]{http://dx.doi.org/}%
\providecommand \selectlanguage [0]{\@gobble}%
\providecommand \bibinfo  [0]{\@secondoftwo}%
\providecommand \bibfield  [0]{\@secondoftwo}%
\providecommand \translation [1]{[#1]}%
\providecommand \BibitemOpen [0]{}%
\providecommand \bibitemStop [0]{}%
\providecommand \bibitemNoStop [0]{.\EOS\space}%
\providecommand \EOS [0]{\spacefactor3000\relax}%
\providecommand \BibitemShut  [1]{\csname bibitem#1\endcsname}%
\let\auto@bib@innerbib\@empty
\bibitem [{\citenamefont {Helstrom}(1976)}]{helstrom}%
  \BibitemOpen
  \bibfield  {author} {\bibinfo {author} {\bibfnamefont {Carl~W.}\ \bibnamefont
  {Helstrom}},\ }\href
  {http://www.sciencedirect.com/science/bookseries/00765392/123} {\emph
  {\bibinfo {title} {Quantum Detection and Estimation Theory}}}\ (\bibinfo
  {publisher} {Academic Press},\ \bibinfo {address} {New York},\ \bibinfo
  {year} {1976})\BibitemShut {NoStop}%
\bibitem [{\citenamefont {Degen}\ \emph {et~al.}(2017)\citenamefont {Degen},
  \citenamefont {Reinhard},\ and\ \citenamefont {Cappellaro}}]{degen17}%
  \BibitemOpen
  \bibfield  {author} {\bibinfo {author} {\bibfnamefont {C.~L.}\ \bibnamefont
  {Degen}}, \bibinfo {author} {\bibfnamefont {F.}~\bibnamefont {Reinhard}}, \
  and\ \bibinfo {author} {\bibfnamefont {P.}~\bibnamefont {Cappellaro}},\
  }\bibfield  {title} {\enquote {\bibinfo {title} {Quantum sensing},}\ }\href
  {\doibase 10.1103/RevModPhys.89.035002} {\bibfield  {journal} {\bibinfo
  {journal} {Reviews of Modern Physics}\ }\textbf {\bibinfo {volume} {89}},\
  \bibinfo {pages} {035002} (\bibinfo {year} {2017})}\BibitemShut {NoStop}%
\bibitem [{\citenamefont {Tsang}(2019)}]{review_cp}%
  \BibitemOpen
  \bibfield  {author} {\bibinfo {author} {\bibfnamefont {Mankei}\ \bibnamefont
  {Tsang}},\ }\bibfield  {title} {\enquote {\bibinfo {title} {Resolving
  starlight: A quantum perspective},}\ }\href {\doibase
  10.1080/00107514.2020.1736375} {\bibfield  {journal} {\bibinfo  {journal}
  {Contemporary Physics}\ }\textbf {\bibinfo {volume} {60}},\ \bibinfo {pages}
  {279–298} (\bibinfo {year} {2019})}\BibitemShut {NoStop}%
\bibitem [{\citenamefont {Lvovsky}\ \emph {et~al.}(2026)\citenamefont
  {Lvovsky}, \citenamefont {Grace}, \citenamefont {Guha}, \citenamefont
  {Tsang}, \citenamefont {Adesso},\ and\ \citenamefont {Treps}}]{lvovsky26}%
  \BibitemOpen
  \bibfield  {author} {\bibinfo {author} {\bibfnamefont {A.~I.}\ \bibnamefont
  {Lvovsky}}, \bibinfo {author} {\bibfnamefont {Michael~R.}\ \bibnamefont
  {Grace}}, \bibinfo {author} {\bibfnamefont {Saikat}\ \bibnamefont {Guha}},
  \bibinfo {author} {\bibfnamefont {Mankei}\ \bibnamefont {Tsang}}, \bibinfo
  {author} {\bibfnamefont {Gerardo}\ \bibnamefont {Adesso}}, \ and\ \bibinfo
  {author} {\bibfnamefont {Nicolas}\ \bibnamefont {Treps}},\ }\href {\doibase
  10.48550/arXiv.2605.10767} {\enquote {\bibinfo {title} {Passive optical
  superresolution at the quantum limit},}\ } (\bibinfo {year} {2026}),\ \Eprint
  {http://arxiv.org/abs/2605.10767} {arXiv:2605.10767 [quant-ph]} \BibitemShut
  {NoStop}%
\bibitem [{\citenamefont {Liu}\ \emph {et~al.}(2019)\citenamefont {Liu},
  \citenamefont {Yuan}, \citenamefont {Lu},\ and\ \citenamefont
  {Wang}}]{liu19}%
  \BibitemOpen
  \bibfield  {author} {\bibinfo {author} {\bibfnamefont {Jing}\ \bibnamefont
  {Liu}}, \bibinfo {author} {\bibfnamefont {Haidong}\ \bibnamefont {Yuan}},
  \bibinfo {author} {\bibfnamefont {Xiao-Ming}\ \bibnamefont {Lu}}, \ and\
  \bibinfo {author} {\bibfnamefont {Xiaoguang}\ \bibnamefont {Wang}},\
  }\bibfield  {title} {\enquote {\bibinfo {title} {Quantum {{Fisher}}
  information matrix and multiparameter estimation},}\ }\href {\doibase
  10.1088/1751-8121/ab5d4d} {\bibfield  {journal} {\bibinfo  {journal} {Journal
  of Physics A: Mathematical and Theoretical}\ }\textbf {\bibinfo {volume}
  {53}},\ \bibinfo {pages} {023001} (\bibinfo {year} {2019})}\BibitemShut
  {NoStop}%
\bibitem [{\citenamefont {{Demkowicz-Dobrzański}}\ \emph
  {et~al.}(2020)\citenamefont {{Demkowicz-Dobrzański}}, \citenamefont
  {Górecki},\ and\ \citenamefont {Guţă}}]{demkowicz20}%
  \BibitemOpen
  \bibfield  {author} {\bibinfo {author} {\bibfnamefont {Rafał}\ \bibnamefont
  {{Demkowicz-Dobrzański}}}, \bibinfo {author} {\bibfnamefont {Wojciech}\
  \bibnamefont {Górecki}}, \ and\ \bibinfo {author} {\bibfnamefont
  {Mădălin}\ \bibnamefont {Guţă}},\ }\bibfield  {title} {\enquote {\bibinfo
  {title} {Multi-parameter estimation beyond quantum {{Fisher}} information},}\
  }\href {\doibase 10.1088/1751-8121/ab8ef3} {\bibfield  {journal} {\bibinfo
  {journal} {Journal of Physics A: Mathematical and Theoretical}\ }\textbf
  {\bibinfo {volume} {53}},\ \bibinfo {pages} {363001} (\bibinfo {year}
  {2020})}\BibitemShut {NoStop}%
\bibitem [{\citenamefont {Pezzè}\ and\ \citenamefont
  {Smerzi}(2025)}]{pezze25}%
  \BibitemOpen
  \bibfield  {author} {\bibinfo {author} {\bibfnamefont {Luca}\ \bibnamefont
  {Pezzè}}\ and\ \bibinfo {author} {\bibfnamefont {Augusto}\ \bibnamefont
  {Smerzi}},\ }\href {\doibase 10.48550/arXiv.2502.17396} {\enquote {\bibinfo
  {title} {Advances in multiparameter quantum sensing and metrology},}\ }
  (\bibinfo {year} {2025}),\ \bibinfo {note} {comment: 12 pages, 4 figures,
  comments are welcome},\ \Eprint {http://arxiv.org/abs/2502.17396}
  {arXiv:2502.17396 [quant-ph]} \BibitemShut {NoStop}%
\bibitem [{\citenamefont {Holevo}(2011)}]{holevo_aspect}%
  \BibitemOpen
  \bibfield  {author} {\bibinfo {author} {\bibfnamefont {Alexander~S.}\
  \bibnamefont {Holevo}},\ }\href {\doibase 10.1007/978-88-7642-378-9} {\emph
  {\bibinfo {title} {Probabilistic and Statistical Aspects of Quantum
  Theory}}}\ (\bibinfo  {publisher} {Scuola Normale Superiore Pisa},\ \bibinfo
  {address} {Pisa, Italy},\ \bibinfo {year} {2011})\BibitemShut {NoStop}%
\bibitem [{\citenamefont {Nagaoka}(1989)}]{nagaoka89}%
  \BibitemOpen
  \bibfield  {author} {\bibinfo {author} {\bibfnamefont {Hiroshi}\ \bibnamefont
  {Nagaoka}},\ }\bibfield  {title} {\enquote {\bibinfo {title} {A new approach
  to {{Cramér--Rao}} bounds for quantum state estimation},}\ }\href@noop {}
  {\bibfield  {journal} {\bibinfo  {journal} {IEICE Technical Report}\ }\textbf
  {\bibinfo {volume} {IT 89-42}},\ \bibinfo {pages} {9–14} (\bibinfo {year}
  {1989})},\ \bibinfo {note} {reprinted in
  \href{http://dx.doi.org/10.1142/9789812563071\_0009}{\emph{Asymptotic Theory
  of Quantum Statistical Inference: Selected Papers}, edited by Masahito
  Hayashi (World Scientific, Singapore, 2005) Chap. 8,
  pp.~100–112}.}\BibitemShut {Stop}%
\bibitem [{\citenamefont {Hayashi}(2024)}]{hayashi24}%
  \BibitemOpen
  \bibfield  {author} {\bibinfo {author} {\bibfnamefont {Masahito}\
  \bibnamefont {Hayashi}},\ }\bibfield  {title} {\enquote {\bibinfo {title}
  {Alexander {{S}}. {{Holevo}}'s researches in quantum information theory in
  20th century},}\ }\href {\doibase 10.1142/S0219749924400069} {\bibfield
  {journal} {\bibinfo  {journal} {International Journal of Quantum
  Information}\ }\textbf {\bibinfo {volume} {22}},\ \bibinfo {pages} {2440006}
  (\bibinfo {year} {2024})}\BibitemShut {NoStop}%
\bibitem [{\citenamefont {Kahn}\ and\ \citenamefont {Guţă}(2009)}]{kahn09}%
  \BibitemOpen
  \bibfield  {author} {\bibinfo {author} {\bibfnamefont {Jonas}\ \bibnamefont
  {Kahn}}\ and\ \bibinfo {author} {\bibfnamefont {Mădălin}\ \bibnamefont
  {Guţă}},\ }\bibfield  {title} {\enquote {\bibinfo {title} {Local
  {{Asymptotic Normality}} for {{Finite Dimensional Quantum Systems}}},}\
  }\href {\doibase 10.1007/s00220-009-0787-3} {\bibfield  {journal} {\bibinfo
  {journal} {Communications in Mathematical Physics}\ }\textbf {\bibinfo
  {volume} {289}},\ \bibinfo {pages} {597–652} (\bibinfo {year}
  {2009})}\BibitemShut {NoStop}%
\bibitem [{\citenamefont {Gill}\ and\ \citenamefont
  {Guţă}(2013)}]{gill_guta}%
  \BibitemOpen
  \bibfield  {author} {\bibinfo {author} {\bibfnamefont {Richard~D.}\
  \bibnamefont {Gill}}\ and\ \bibinfo {author} {\bibfnamefont {Mădălin}\
  \bibnamefont {Guţă}},\ }\bibfield  {title} {\enquote {\bibinfo {title} {On
  asymptotic quantum statistical inference},}\ }in\ \href {\doibase
  10.1214/12-IMSCOLL909} {\emph {\bibinfo {booktitle} {From Probability to
  Statistics and Back: {{High-dimensional}} Models and Processes – a
  Festschrift in Honor of Jon a. {{Wellner}}}}},\ \bibinfo {series}
  {Collections}, Vol.~\bibinfo {volume} {9},\ \bibinfo {editor} {edited by\
  \bibinfo {editor} {\bibfnamefont {M.}~\bibnamefont {Banerjee}}, \bibinfo
  {editor} {\bibfnamefont {F.}~\bibnamefont {Bunea}}, \bibinfo {editor}
  {\bibfnamefont {J.}~\bibnamefont {Huang}}, \bibinfo {editor} {\bibfnamefont
  {V.}~\bibnamefont {Koltchinskii}}, \ and\ \bibinfo {editor} {\bibfnamefont
  {M.~H.}\ \bibnamefont {Maathuis}}}\ (\bibinfo  {publisher} {Institute of
  Mathematical Statistics},\ \bibinfo {address} {Beachwood, Ohio, USA},\
  \bibinfo {year} {2013})\ p.\ \bibinfo {pages} {105–127}\BibitemShut
  {NoStop}%
\bibitem [{\citenamefont {Yamagata}\ \emph {et~al.}(2013)\citenamefont
  {Yamagata}, \citenamefont {Fujiwara},\ and\ \citenamefont
  {Gill}}]{yamagata13}%
  \BibitemOpen
  \bibfield  {author} {\bibinfo {author} {\bibfnamefont {Koichi}\ \bibnamefont
  {Yamagata}}, \bibinfo {author} {\bibfnamefont {Akio}\ \bibnamefont
  {Fujiwara}}, \ and\ \bibinfo {author} {\bibfnamefont {Richard~D.}\
  \bibnamefont {Gill}},\ }\bibfield  {title} {\enquote {\bibinfo {title}
  {Quantum local asymptotic normality based on a new quantum likelihood
  ratio},}\ }\href {\doibase 10.1214/13-AOS1147} {\bibfield  {journal}
  {\bibinfo  {journal} {The Annals of Statistics}\ }\textbf {\bibinfo {volume}
  {41}},\ \bibinfo {pages} {2197–2217} (\bibinfo {year} {2013})}\BibitemShut
  {NoStop}%
\bibitem [{\citenamefont {Yang}\ \emph {et~al.}(2019)\citenamefont {Yang},
  \citenamefont {Chiribella},\ and\ \citenamefont {Hayashi}}]{yang19}%
  \BibitemOpen
  \bibfield  {author} {\bibinfo {author} {\bibfnamefont {Yuxiang}\ \bibnamefont
  {Yang}}, \bibinfo {author} {\bibfnamefont {Giulio}\ \bibnamefont
  {Chiribella}}, \ and\ \bibinfo {author} {\bibfnamefont {Masahito}\
  \bibnamefont {Hayashi}},\ }\bibfield  {title} {\enquote {\bibinfo {title}
  {Attaining the {{Ultimate Precision Limit}} in {{Quantum State
  Estimation}}},}\ }\href {\doibase 10.1007/s00220-019-03433-4} {\bibfield
  {journal} {\bibinfo  {journal} {Communications in Mathematical Physics}\
  }\textbf {\bibinfo {volume} {368}},\ \bibinfo {pages} {223–293} (\bibinfo
  {year} {2019})}\BibitemShut {NoStop}%
\bibitem [{\citenamefont {Ohya}\ and\ \citenamefont {Petz}(1993)}]{ohya}%
  \BibitemOpen
  \bibfield  {author} {\bibinfo {author} {\bibfnamefont {M.}~\bibnamefont
  {Ohya}}\ and\ \bibinfo {author} {\bibfnamefont {Denes}\ \bibnamefont
  {Petz}},\ }\href {http://www.springer.com/gp/book/9783540208068} {\emph
  {\bibinfo {title} {Quantum {{Entropy}} and {{Its Use}}}}}\ (\bibinfo
  {publisher} {Springer-Verlag},\ \bibinfo {address} {Berlin Heidelberg},\
  \bibinfo {year} {1993})\BibitemShut {NoStop}%
\bibitem [{\citenamefont {Verbeure}(2011)}]{verbeure}%
  \BibitemOpen
  \bibfield  {author} {\bibinfo {author} {\bibfnamefont {Andréé~F.}\
  \bibnamefont {Verbeure}},\ }\href {\doibase 10.1007/978-0-85729-109-7} {\emph
  {\bibinfo {title} {Many-Body Boson Systems}}}\ (\bibinfo  {publisher}
  {Springer},\ \bibinfo {address} {London, England, UK},\ \bibinfo {year}
  {2011})\BibitemShut {NoStop}%
\bibitem [{\citenamefont {Zhou}\ and\ \citenamefont {Chen}(2026)}]{zhou26}%
  \BibitemOpen
  \bibfield  {author} {\bibinfo {author} {\bibfnamefont {Sisi}\ \bibnamefont
  {Zhou}}\ and\ \bibinfo {author} {\bibfnamefont {Senrui}\ \bibnamefont
  {Chen}},\ }\bibfield  {title} {\enquote {\bibinfo {title} {Randomized
  measurements for multiparameter quantum metrology},}\ }\href {\doibase
  10.1103/s27y-gbrp} {\bibfield  {journal} {\bibinfo  {journal} {PRX Quantum}\
  }\textbf {\bibinfo {volume} {7}},\ \bibinfo {pages} {010314} (\bibinfo {year}
  {2026})}\BibitemShut {NoStop}%
\bibitem [{\citenamefont {Matsumoto}(2002)}]{matsumoto02}%
  \BibitemOpen
  \bibfield  {author} {\bibinfo {author} {\bibfnamefont {K.}~\bibnamefont
  {Matsumoto}},\ }\bibfield  {title} {\enquote {\bibinfo {title} {A new
  approach to the {{Cramér--Rao-type}} bound of the pure-state model},}\ }\href
  {\doibase 10.1088/0305-4470/35/13/307} {\bibfield  {journal} {\bibinfo
  {journal} {Journal of Physics A: Mathematical and General}\ }\textbf
  {\bibinfo {volume} {35}},\ \bibinfo {pages} {3111–3123} (\bibinfo {year}
  {2002})}\BibitemShut {NoStop}%
\bibitem [{\citenamefont {Tsang}\ \emph {et~al.}(2011)\citenamefont {Tsang},
  \citenamefont {Wiseman},\ and\ \citenamefont {Caves}}]{twc}%
  \BibitemOpen
  \bibfield  {author} {\bibinfo {author} {\bibfnamefont {Mankei}\ \bibnamefont
  {Tsang}}, \bibinfo {author} {\bibfnamefont {Howard~M.}\ \bibnamefont
  {Wiseman}}, \ and\ \bibinfo {author} {\bibfnamefont {Carlton~M.}\
  \bibnamefont {Caves}},\ }\bibfield  {title} {\enquote {\bibinfo {title}
  {Fundamental {{Quantum Limit}} to {{Waveform Estimation}}},}\ }\href
  {\doibase 10.1103/PhysRevLett.106.090401} {\bibfield  {journal} {\bibinfo
  {journal} {Physical Review Letters}\ }\textbf {\bibinfo {volume} {106}},\
  \bibinfo {pages} {090401} (\bibinfo {year} {2011})}\BibitemShut {NoStop}%
\bibitem [{\citenamefont {Vidrighin}\ \emph {et~al.}(2014)\citenamefont
  {Vidrighin}, \citenamefont {Donati}, \citenamefont {Genoni}, \citenamefont
  {Jin}, \citenamefont {Kolthammer}, \citenamefont {Kim}, \citenamefont
  {Datta}, \citenamefont {Barbieri},\ and\ \citenamefont
  {Walmsley}}]{vidrighin}%
  \BibitemOpen
  \bibfield  {author} {\bibinfo {author} {\bibfnamefont {Mihai~D.}\
  \bibnamefont {Vidrighin}}, \bibinfo {author} {\bibfnamefont {Gaia}\
  \bibnamefont {Donati}}, \bibinfo {author} {\bibfnamefont {Marco~G.}\
  \bibnamefont {Genoni}}, \bibinfo {author} {\bibfnamefont {Xian-Min}\
  \bibnamefont {Jin}}, \bibinfo {author} {\bibfnamefont {W.~Steven}\
  \bibnamefont {Kolthammer}}, \bibinfo {author} {\bibfnamefont {M.~S.}\
  \bibnamefont {Kim}}, \bibinfo {author} {\bibfnamefont {Animesh}\ \bibnamefont
  {Datta}}, \bibinfo {author} {\bibfnamefont {Marco}\ \bibnamefont {Barbieri}},
  \ and\ \bibinfo {author} {\bibfnamefont {Ian~A.}\ \bibnamefont {Walmsley}},\
  }\bibfield  {title} {\enquote {\bibinfo {title} {Joint estimation of phase
  and phase diffusion for quantum metrology},}\ }\href {\doibase
  10.1038/ncomms4532} {\bibfield  {journal} {\bibinfo  {journal} {Nature
  Communications}\ }\textbf {\bibinfo {volume} {5}},\ \bibinfo {pages} {3532}
  (\bibinfo {year} {2014})}\BibitemShut {NoStop}%
\bibitem [{\citenamefont {Genoni}\ \emph {et~al.}(2016)\citenamefont {Genoni},
  \citenamefont {Lami},\ and\ \citenamefont {Serafini}}]{genoni16}%
  \BibitemOpen
  \bibfield  {author} {\bibinfo {author} {\bibfnamefont {Marco~G.}\
  \bibnamefont {Genoni}}, \bibinfo {author} {\bibfnamefont {Ludovico}\
  \bibnamefont {Lami}}, \ and\ \bibinfo {author} {\bibfnamefont {Alessio}\
  \bibnamefont {Serafini}},\ }\bibfield  {title} {\enquote {\bibinfo {title}
  {Conditional and unconditional {{Gaussian}} quantum dynamics},}\ }\href
  {\doibase 10.1080/00107514.2015.1125624} {\bibfield  {journal} {\bibinfo
  {journal} {Contemporary Physics}\ }\textbf {\bibinfo {volume} {57}},\
  \bibinfo {pages} {331–349} (\bibinfo {year} {2016})}\BibitemShut {NoStop}%
\bibitem [{\citenamefont {Tsang}\ \emph {et~al.}(2016)\citenamefont {Tsang},
  \citenamefont {Nair},\ and\ \citenamefont {Lu}}]{tnl}%
  \BibitemOpen
  \bibfield  {author} {\bibinfo {author} {\bibfnamefont {Mankei}\ \bibnamefont
  {Tsang}}, \bibinfo {author} {\bibfnamefont {Ranjith}\ \bibnamefont {Nair}}, \
  and\ \bibinfo {author} {\bibfnamefont {Xiao-Ming}\ \bibnamefont {Lu}},\
  }\bibfield  {title} {\enquote {\bibinfo {title} {Quantum theory of
  superresolution for two incoherent optical point sources},}\ }\href {\doibase
  10.1103/PhysRevX.6.031033} {\bibfield  {journal} {\bibinfo  {journal}
  {Physical Review X}\ }\textbf {\bibinfo {volume} {6}},\ \bibinfo {pages}
  {031033} (\bibinfo {year} {2016})}\BibitemShut {NoStop}%
\bibitem [{\citenamefont {Parniak}\ \emph {et~al.}(2018)\citenamefont
  {Parniak}, \citenamefont {Boroszko}, \citenamefont {Wasilewski},
  \citenamefont {Banaszek},\ and\ \citenamefont
  {{Demkowicz-Dobrzański}}}]{parniak18}%
  \BibitemOpen
  \bibfield  {author} {\bibinfo {author} {\bibfnamefont {Sebastian}\
  \bibnamefont {Parniak}, \bibfnamefont {Michałand~Borówka}}, \bibinfo
  {author} {\bibfnamefont {Kajetan}\ \bibnamefont {Boroszko}}, \bibinfo
  {author} {\bibfnamefont {Wojciech}\ \bibnamefont {Wasilewski}}, \bibinfo
  {author} {\bibfnamefont {Konrad}\ \bibnamefont {Banaszek}}, \ and\ \bibinfo
  {author} {\bibfnamefont {Rafał}\ \bibnamefont {{Demkowicz-Dobrzański}}},\
  }\bibfield  {title} {\enquote {\bibinfo {title} {Beating the {{Rayleigh Limit
  Using Two-Photon Interference}}},}\ }\href {\doibase
  10.1103/PhysRevLett.121.250503} {\bibfield  {journal} {\bibinfo  {journal}
  {Physical Review Letters}\ }\textbf {\bibinfo {volume} {121}},\ \bibinfo
  {pages} {250503} (\bibinfo {year} {2018})}\BibitemShut {NoStop}%
\bibitem [{\citenamefont {Hou}\ \emph {et~al.}(2018)\citenamefont {Hou},
  \citenamefont {Tang}, \citenamefont {Shang}, \citenamefont {Zhu},
  \citenamefont {Li}, \citenamefont {Yuan}, \citenamefont {Wu}, \citenamefont
  {Xiang}, \citenamefont {Li},\ and\ \citenamefont {Guo}}]{hou18}%
  \BibitemOpen
  \bibfield  {author} {\bibinfo {author} {\bibfnamefont {Zhibo}\ \bibnamefont
  {Hou}}, \bibinfo {author} {\bibfnamefont {Jun-Feng}\ \bibnamefont {Tang}},
  \bibinfo {author} {\bibfnamefont {Jiangwei}\ \bibnamefont {Shang}}, \bibinfo
  {author} {\bibfnamefont {Huangjun}\ \bibnamefont {Zhu}}, \bibinfo {author}
  {\bibfnamefont {Jian}\ \bibnamefont {Li}}, \bibinfo {author} {\bibfnamefont
  {Yuan}\ \bibnamefont {Yuan}}, \bibinfo {author} {\bibfnamefont {Kang-Da}\
  \bibnamefont {Wu}}, \bibinfo {author} {\bibfnamefont {Guo-Yong}\ \bibnamefont
  {Xiang}}, \bibinfo {author} {\bibfnamefont {Chuan-Feng}\ \bibnamefont {Li}},
  \ and\ \bibinfo {author} {\bibfnamefont {Guang-Can}\ \bibnamefont {Guo}},\
  }\bibfield  {title} {\enquote {\bibinfo {title} {Deterministic realization of
  collective measurements via photonic quantum walks},}\ }\href {\doibase
  10.1038/s41467-018-03849-x} {\bibfield  {journal} {\bibinfo  {journal}
  {Nature Communications}\ }\textbf {\bibinfo {volume} {9}},\ \bibinfo {pages}
  {1414} (\bibinfo {year} {2018})}\BibitemShut {NoStop}%
\bibitem [{\citenamefont {Albarelli}\ \emph {et~al.}(2020)\citenamefont
  {Albarelli}, \citenamefont {Barbieri}, \citenamefont {Genoni},\ and\
  \citenamefont {Gianani}}]{albarelli20a}%
  \BibitemOpen
  \bibfield  {author} {\bibinfo {author} {\bibfnamefont {F.}~\bibnamefont
  {Albarelli}}, \bibinfo {author} {\bibfnamefont {M.}~\bibnamefont {Barbieri}},
  \bibinfo {author} {\bibfnamefont {M.~G.}\ \bibnamefont {Genoni}}, \ and\
  \bibinfo {author} {\bibfnamefont {I.}~\bibnamefont {Gianani}},\ }\bibfield
  {title} {\enquote {\bibinfo {title} {A perspective on multiparameter quantum
  metrology: {{From}} theoretical tools to applications in quantum imaging},}\
  }\href {\doibase 10.1016/j.physleta.2020.126311} {\bibfield  {journal}
  {\bibinfo  {journal} {Physics Letters A}\ }\textbf {\bibinfo {volume}
  {384}},\ \bibinfo {pages} {126311} (\bibinfo {year} {2020})}\BibitemShut
  {NoStop}%
\bibitem [{\citenamefont {Ansari}\ \emph {et~al.}(2021)\citenamefont {Ansari},
  \citenamefont {Brecht}, \citenamefont {{Gil-Lopez}}, \citenamefont {Donohue},
  \citenamefont {Řeháček}, \citenamefont {Hradil}, \citenamefont
  {{Sánchez-Soto}},\ and\ \citenamefont {Silberhorn}}]{ansari21}%
  \BibitemOpen
  \bibfield  {author} {\bibinfo {author} {\bibfnamefont {Vahid}\ \bibnamefont
  {Ansari}}, \bibinfo {author} {\bibfnamefont {Benjamin}\ \bibnamefont
  {Brecht}}, \bibinfo {author} {\bibfnamefont {Jano}\ \bibnamefont
  {{Gil-Lopez}}}, \bibinfo {author} {\bibfnamefont {John~M.}\ \bibnamefont
  {Donohue}}, \bibinfo {author} {\bibfnamefont {Jaroslav}\ \bibnamefont
  {Řeháček}}, \bibinfo {author} {\bibfnamefont {Zdeněk}\ \bibnamefont
  {Hradil}}, \bibinfo {author} {\bibfnamefont {Luis~L.}\ \bibnamefont
  {{Sánchez-Soto}}}, \ and\ \bibinfo {author} {\bibfnamefont {Christine}\
  \bibnamefont {Silberhorn}},\ }\bibfield  {title} {\enquote {\bibinfo {title}
  {Achieving the {{Ultimate Quantum Timing Resolution}}},}\ }\href {\doibase
  10.1103/PRXQuantum.2.010301} {\bibfield  {journal} {\bibinfo  {journal} {PRX
  Quantum}\ }\textbf {\bibinfo {volume} {2}},\ \bibinfo {pages} {010301}
  (\bibinfo {year} {2021})}\BibitemShut {NoStop}%
\bibitem [{\citenamefont {Chen}\ \emph {et~al.}(2022)\citenamefont {Chen},
  \citenamefont {Chen},\ and\ \citenamefont {Yuan}}]{chen22}%
  \BibitemOpen
  \bibfield  {author} {\bibinfo {author} {\bibfnamefont {Hongzhen}\
  \bibnamefont {Chen}}, \bibinfo {author} {\bibfnamefont {Yu}~\bibnamefont
  {Chen}}, \ and\ \bibinfo {author} {\bibfnamefont {Haidong}\ \bibnamefont
  {Yuan}},\ }\bibfield  {title} {\enquote {\bibinfo {title} {Information
  geometry under hierarchical quantum measurement},}\ }\href {\doibase
  10.1103/PhysRevLett.128.250502} {\bibfield  {journal} {\bibinfo  {journal}
  {Physical Review Letters}\ }\textbf {\bibinfo {volume} {128}},\ \bibinfo
  {pages} {250502} (\bibinfo {year} {2022})}\BibitemShut {NoStop}%
\bibitem [{\citenamefont {Chen}\ \emph {et~al.}(2024)\citenamefont {Chen},
  \citenamefont {Wang},\ and\ \citenamefont {Yuan}}]{chen24}%
  \BibitemOpen
  \bibfield  {author} {\bibinfo {author} {\bibfnamefont {Hongzhen}\
  \bibnamefont {Chen}}, \bibinfo {author} {\bibfnamefont {Lingna}\ \bibnamefont
  {Wang}}, \ and\ \bibinfo {author} {\bibfnamefont {Haidong}\ \bibnamefont
  {Yuan}},\ }\bibfield  {title} {\enquote {\bibinfo {title} {Simultaneous
  measurement of multiple incompatible observables and tradeoff in
  multiparameter quantum estimation},}\ }\href {\doibase
  10.1038/s41534-024-00894-x} {\bibfield  {journal} {\bibinfo  {journal} {npj
  Quantum Information}\ }\textbf {\bibinfo {volume} {10}},\ \bibinfo {pages}
  {98} (\bibinfo {year} {2024})}\BibitemShut {NoStop}%
\bibitem [{\citenamefont {Pelecanos}\ \emph {et~al.}(2025)\citenamefont
  {Pelecanos}, \citenamefont {Spilecki},\ and\ \citenamefont
  {Wright}}]{pelecanos25}%
  \BibitemOpen
  \bibfield  {author} {\bibinfo {author} {\bibfnamefont {Angelos}\ \bibnamefont
  {Pelecanos}}, \bibinfo {author} {\bibfnamefont {Jack}\ \bibnamefont
  {Spilecki}}, \ and\ \bibinfo {author} {\bibfnamefont {John}\ \bibnamefont
  {Wright}},\ }\bibfield  {title} {\enquote {\bibinfo {title} {The debiased
  {{Keyl}}'s algorithm: A new unbiased estimator for full state tomography},}\
  }\href {\doibase 10.48550/arXiv.2510.07788} {\bibfield  {journal} {\bibinfo
  {journal} {ArXiv e-prints}\ } (\bibinfo {year} {2025}),\
  10.48550/arXiv.2510.07788},\ \Eprint {http://arxiv.org/abs/2510.07788}
  {2510.07788} \BibitemShut {NoStop}%
\bibitem [{\citenamefont {Tsang}\ \emph {et~al.}(2020)\citenamefont {Tsang},
  \citenamefont {Albarelli},\ and\ \citenamefont {Datta}}]{semi_prx}%
  \BibitemOpen
  \bibfield  {author} {\bibinfo {author} {\bibfnamefont {Mankei}\ \bibnamefont
  {Tsang}}, \bibinfo {author} {\bibfnamefont {Francesco}\ \bibnamefont
  {Albarelli}}, \ and\ \bibinfo {author} {\bibfnamefont {Animesh}\ \bibnamefont
  {Datta}},\ }\bibfield  {title} {\enquote {\bibinfo {title} {Quantum
  {{Semiparametric Estimation}}},}\ }\href {\doibase
  10.1103/PhysRevX.10.031023} {\bibfield  {journal} {\bibinfo  {journal}
  {Physical Review X}\ }\textbf {\bibinfo {volume} {10}},\ \bibinfo {pages}
  {031023} (\bibinfo {year} {2020})}\BibitemShut {NoStop}%
\bibitem [{\citenamefont {Tsang}(2021)}]{qlmoment_pra2}%
  \BibitemOpen
  \bibfield  {author} {\bibinfo {author} {\bibfnamefont {Mankei}\ \bibnamefont
  {Tsang}},\ }\bibfield  {title} {\enquote {\bibinfo {title} {Quantum limit to
  subdiffraction incoherent optical imaging. {{II}}. {{A}} parametric-submodel
  approach},}\ }\href {\doibase 10.1103/PhysRevA.104.052411} {\bibfield
  {journal} {\bibinfo  {journal} {Physical Review A}\ }\textbf {\bibinfo
  {volume} {104}},\ \bibinfo {pages} {052411} (\bibinfo {year}
  {2021})}\BibitemShut {NoStop}%
\bibitem [{\citenamefont {Fujiwara}(2005)}]{fujiwara05}%
  \BibitemOpen
  \bibfield  {author} {\bibinfo {author} {\bibfnamefont {Akio}\ \bibnamefont
  {Fujiwara}},\ }\bibfield  {title} {\enquote {\bibinfo {title} {Geometry of
  quantum estimation theory},}\ }in\ \href {\doibase
  10.1142/9789812563071_0019} {\emph {\bibinfo {booktitle} {Asymptotic Theory
  of Quantum Statistical Inference: {{Selected}} Papers}}},\ \bibinfo {editor}
  {edited by\ \bibinfo {editor} {\bibfnamefont {Masahito}\ \bibnamefont
  {Hayashi}}}\ (\bibinfo  {publisher} {World Scientific},\ \bibinfo {address}
  {Singapore},\ \bibinfo {year} {2005})\ Chap.~\bibinfo {chapter} {18}, p.\
  \bibinfo {pages} {229–286}\BibitemShut {NoStop}%
\bibitem [{\citenamefont {Carollo}\ \emph {et~al.}(2019)\citenamefont
  {Carollo}, \citenamefont {Spagnolo}, \citenamefont {Dubkov},\ and\
  \citenamefont {Valenti}}]{carollo19}%
  \BibitemOpen
  \bibfield  {author} {\bibinfo {author} {\bibfnamefont {Angelo}\ \bibnamefont
  {Carollo}}, \bibinfo {author} {\bibfnamefont {Bernardo}\ \bibnamefont
  {Spagnolo}}, \bibinfo {author} {\bibfnamefont {Alexander~A.}\ \bibnamefont
  {Dubkov}}, \ and\ \bibinfo {author} {\bibfnamefont {Davide}\ \bibnamefont
  {Valenti}},\ }\bibfield  {title} {\enquote {\bibinfo {title} {On quantumness
  in multi-parameter quantum estimation},}\ }\href {\doibase
  10.1088/1742-5468/ab3ccb} {\bibfield  {journal} {\bibinfo  {journal} {Journal
  of Statistical Mechanics: Theory and Experiment}\ }\textbf {\bibinfo {volume}
  {2019}},\ \bibinfo {pages} {094010} (\bibinfo {year} {2019})}\BibitemShut
  {NoStop}%
\bibitem [{\citenamefont {Carollo}\ \emph {et~al.}(2020)\citenamefont
  {Carollo}, \citenamefont {Spagnolo}, \citenamefont {Dubkov},\ and\
  \citenamefont {Valenti}}]{carollo20}%
  \BibitemOpen
  \bibfield  {author} {\bibinfo {author} {\bibfnamefont {Angelo}\ \bibnamefont
  {Carollo}}, \bibinfo {author} {\bibfnamefont {Bernardo}\ \bibnamefont
  {Spagnolo}}, \bibinfo {author} {\bibfnamefont {Alexander~A.}\ \bibnamefont
  {Dubkov}}, \ and\ \bibinfo {author} {\bibfnamefont {Davide}\ \bibnamefont
  {Valenti}},\ }\bibfield  {title} {\enquote {\bibinfo {title} {Erratum: {{On}}
  quantumness in multi-parameter quantum estimation (2019 {{J}}. {{Stat}}.
  {{Mech}}. 094010)},}\ }\href {\doibase 10.1088/1742-5468/ab6f5e} {\bibfield
  {journal} {\bibinfo  {journal} {Journal of Statistical Mechanics: Theory and
  Experiment}\ }\textbf {\bibinfo {volume} {2020}},\ \bibinfo {pages} {029902}
  (\bibinfo {year} {2020})}\BibitemShut {NoStop}%
\bibitem [{\citenamefont {{Van Trees}}(2003)}]{vantrees2}%
  \BibitemOpen
  \bibfield  {author} {\bibinfo {author} {\bibfnamefont {Harry~L.}\
  \bibnamefont {{Van Trees}}},\ }\href
  {http://books.google.com/books?id=4HtGAAAAYAAJ} {\emph {\bibinfo {title}
  {Detection, Estimation, and Modulation Theory, Part {{II}}: {{Nonlinear}}
  Modulation Theory}}}\ (\bibinfo  {publisher} {John Wiley \& Sons},\ \bibinfo
  {address} {New York},\ \bibinfo {year} {2003})\BibitemShut {NoStop}%
\bibitem [{\citenamefont {Wiseman}\ and\ \citenamefont
  {Milburn}(2010)}]{wiseman_milburn}%
  \BibitemOpen
  \bibfield  {author} {\bibinfo {author} {\bibfnamefont {Howard~M.}\
  \bibnamefont {Wiseman}}\ and\ \bibinfo {author} {\bibfnamefont {Gerard~J.}\
  \bibnamefont {Milburn}},\ }\href {\doibase 10.1017/CBO9780511813948} {\emph
  {\bibinfo {title} {Quantum Measurement and Control}}}\ (\bibinfo  {publisher}
  {Cambridge University Press},\ \bibinfo {address} {Cambridge},\ \bibinfo
  {year} {2010})\BibitemShut {NoStop}%
\bibitem [{\citenamefont {Larson}\ and\ \citenamefont
  {Mavrogordatos}(2024)}]{larson}%
  \BibitemOpen
  \bibfield  {author} {\bibinfo {author} {\bibfnamefont {Jonas}\ \bibnamefont
  {Larson}}\ and\ \bibinfo {author} {\bibfnamefont {Themistoklis}\ \bibnamefont
  {Mavrogordatos}},\ }\href {\doibase 10.1088/978-0-7503-6452-2} {\emph
  {\bibinfo {title} {The Jaynes–Cummings Model and Its Descendants}}},\
  \bibinfo {edition} {2nd}\ ed.\ (\bibinfo  {publisher} {IOP Publishing},\
  \bibinfo {address} {Bristol, England, UK},\ \bibinfo {year}
  {2024})\BibitemShut {NoStop}%
\bibitem [{\citenamefont {Miyazaki}\ and\ \citenamefont
  {Matsumoto}(2022)}]{miyazaki22}%
  \BibitemOpen
  \bibfield  {author} {\bibinfo {author} {\bibfnamefont {Jisho}\ \bibnamefont
  {Miyazaki}}\ and\ \bibinfo {author} {\bibfnamefont {Keiji}\ \bibnamefont
  {Matsumoto}},\ }\bibfield  {title} {\enquote {\bibinfo {title}
  {Imaginarity-free quantum multiparameter estimation},}\ }\href {\doibase
  10.22331/q-2022-03-10-665} {\bibfield  {journal} {\bibinfo  {journal}
  {Quantum}\ }\textbf {\bibinfo {volume} {6}},\ \bibinfo {pages} {665}
  (\bibinfo {year} {2022})},\ \Eprint {http://arxiv.org/abs/2010.15465v3}
  {2010.15465v3} \BibitemShut {NoStop}%
\bibitem [{\citenamefont {Holevo}(1977)}]{holevo77}%
  \BibitemOpen
  \bibfield  {author} {\bibinfo {author} {\bibfnamefont {A.~S.}\ \bibnamefont
  {Holevo}},\ }\bibfield  {title} {\enquote {\bibinfo {title} {Commutation
  superoperator of a state and its applications to the noncommutative
  statistics},}\ }\href {\doibase 10.1016/0034-4877(77)90009-X} {\bibfield
  {journal} {\bibinfo  {journal} {Reports on Mathematical Physics}\ }\textbf
  {\bibinfo {volume} {12}},\ \bibinfo {pages} {251–271} (\bibinfo {year}
  {1977})}\BibitemShut {NoStop}%
\bibitem [{\citenamefont {Horn}\ and\ \citenamefont {Johnson}(2013)}]{horn}%
  \BibitemOpen
  \bibfield  {author} {\bibinfo {author} {\bibfnamefont {Roger~A.}\
  \bibnamefont {Horn}}\ and\ \bibinfo {author} {\bibfnamefont {Charles~R.}\
  \bibnamefont {Johnson}},\ }\href@noop {} {\emph {\bibinfo {title} {Matrix
  Analysis}}},\ \bibinfo {edition} {2nd}\ ed.\ (\bibinfo  {publisher}
  {Cambridge University Press},\ \bibinfo {address} {Cambridge},\ \bibinfo
  {year} {2013})\BibitemShut {NoStop}%
\bibitem [{sm()}]{sm}%
  \BibitemOpen
  \href@noop {} {}\bibinfo {note} {The Supplemental Material proves that, to
  achieve the Holevo–Nagaoka bound, the required state of the general-dyne
  ancillas is always physical, and if $W = I$, the state is the vacuum state,
  meaning that the general-dyne measurement is simply heterodyne.}\BibitemShut
  {Stop}%
\bibitem [{\citenamefont {Braginsky}\ and\ \citenamefont
  {Khalili}(1992)}]{braginsky}%
  \BibitemOpen
  \bibfield  {author} {\bibinfo {author} {\bibfnamefont {Vladimir~B.}\
  \bibnamefont {Braginsky}}\ and\ \bibinfo {author} {\bibfnamefont {Farid~Ya.}\
  \bibnamefont {Khalili}},\ }\href@noop {} {\emph {\bibinfo {title} {Quantum
  Measurement}}}\ (\bibinfo  {publisher} {Cambridge University Press},\
  \bibinfo {address} {Cambridge},\ \bibinfo {year} {1992})\BibitemShut
  {NoStop}%
\bibitem [{\citenamefont {Grangier}\ \emph {et~al.}(1998)\citenamefont
  {Grangier}, \citenamefont {Levenson},\ and\ \citenamefont
  {Poizat}}]{grangier98}%
  \BibitemOpen
  \bibfield  {author} {\bibinfo {author} {\bibfnamefont {Philippe}\
  \bibnamefont {Grangier}}, \bibinfo {author} {\bibfnamefont {Juan~Ariel}\
  \bibnamefont {Levenson}}, \ and\ \bibinfo {author} {\bibfnamefont
  {Jean-Philippe}\ \bibnamefont {Poizat}},\ }\bibfield  {title} {\enquote
  {\bibinfo {title} {Quantum non-demolition measurements in optics},}\ }\href
  {\doibase 10.1038/25059} {\bibfield  {journal} {\bibinfo  {journal} {Nature}\
  }\textbf {\bibinfo {volume} {396}},\ \bibinfo {pages} {537–542} (\bibinfo
  {year} {1998})}\BibitemShut {NoStop}%
\bibitem [{\citenamefont {Hammerer}\ \emph {et~al.}(2010)\citenamefont
  {Hammerer}, \citenamefont {Sørensen},\ and\ \citenamefont
  {Polzik}}]{hammerer10}%
  \BibitemOpen
  \bibfield  {author} {\bibinfo {author} {\bibfnamefont {Klemens}\ \bibnamefont
  {Hammerer}}, \bibinfo {author} {\bibfnamefont {Anders~S.}\ \bibnamefont
  {Sørensen}}, \ and\ \bibinfo {author} {\bibfnamefont {Eugene~S.}\
  \bibnamefont {Polzik}},\ }\bibfield  {title} {\enquote {\bibinfo {title}
  {Quantum interface between light and atomic ensembles},}\ }\href {\doibase
  10.1103/RevModPhys.82.1041} {\bibfield  {journal} {\bibinfo  {journal}
  {Reviews of Modern Physics}\ }\textbf {\bibinfo {volume} {82}},\ \bibinfo
  {pages} {1041–1093} (\bibinfo {year} {2010})}\BibitemShut {NoStop}%
\bibitem [{\citenamefont {Drummond}\ and\ \citenamefont
  {Hillery}(2014)}]{drummond}%
  \BibitemOpen
  \bibfield  {author} {\bibinfo {author} {\bibfnamefont {Peter~D.}\
  \bibnamefont {Drummond}}\ and\ \bibinfo {author} {\bibfnamefont {Mark}\
  \bibnamefont {Hillery}},\ }\href {\doibase 10.1017/CBO9780511783616} {\emph
  {\bibinfo {title} {The Quantum Theory of Nonlinear Optics}}}\ (\bibinfo
  {publisher} {Cambridge University Press},\ \bibinfo {address} {Cambridge,
  England, UK},\ \bibinfo {year} {2014})\BibitemShut {NoStop}%
\bibitem [{\citenamefont {Firstenberg}\ \emph {et~al.}(2016)\citenamefont
  {Firstenberg}, \citenamefont {Adams},\ and\ \citenamefont
  {Hofferberth}}]{firstenberg16}%
  \BibitemOpen
  \bibfield  {author} {\bibinfo {author} {\bibfnamefont {O.}~\bibnamefont
  {Firstenberg}}, \bibinfo {author} {\bibfnamefont {C.~S.}\ \bibnamefont
  {Adams}}, \ and\ \bibinfo {author} {\bibfnamefont {S.}~\bibnamefont
  {Hofferberth}},\ }\bibfield  {title} {\enquote {\bibinfo {title} {Nonlinear
  quantum optics mediated by {{Rydberg}} interactions},}\ }\href {\doibase
  10.1088/0953-4075/49/15/152003} {\bibfield  {journal} {\bibinfo  {journal}
  {Journal of Physics B: Atomic, Molecular and Optical Physics}\ }\textbf
  {\bibinfo {volume} {49}},\ \bibinfo {pages} {152003} (\bibinfo {year}
  {2016})}\BibitemShut {NoStop}%
\bibitem [{\citenamefont {Gill}\ and\ \citenamefont
  {Massar}(2000)}]{gill_massar}%
  \BibitemOpen
  \bibfield  {author} {\bibinfo {author} {\bibfnamefont {Richard~D.}\
  \bibnamefont {Gill}}\ and\ \bibinfo {author} {\bibfnamefont {Serge}\
  \bibnamefont {Massar}},\ }\bibfield  {title} {\enquote {\bibinfo {title}
  {State estimation for large ensembles},}\ }\href {\doibase
  10.1103/PhysRevA.61.042312} {\bibfield  {journal} {\bibinfo  {journal}
  {Physical Review A}\ }\textbf {\bibinfo {volume} {61}},\ \bibinfo {pages}
  {042312} (\bibinfo {year} {2000})}\BibitemShut {NoStop}%
\bibitem [{\citenamefont {Conlon}\ \emph {et~al.}(2021)\citenamefont {Conlon},
  \citenamefont {Suzuki}, \citenamefont {Lam},\ and\ \citenamefont
  {Assad}}]{conlon21}%
  \BibitemOpen
  \bibfield  {author} {\bibinfo {author} {\bibfnamefont {Lorcán~O.}\
  \bibnamefont {Conlon}}, \bibinfo {author} {\bibfnamefont {Jun}\ \bibnamefont
  {Suzuki}}, \bibinfo {author} {\bibfnamefont {Ping~Koy}\ \bibnamefont {Lam}},
  \ and\ \bibinfo {author} {\bibfnamefont {Syed~M.}\ \bibnamefont {Assad}},\
  }\bibfield  {title} {\enquote {\bibinfo {title} {Efficient computation of the
  {{Nagaoka}}–{{Hayashi}} bound for multiparameter estimation with separable
  measurements},}\ }\href {\doibase 10.1038/s41534-021-00414-1} {\bibfield
  {journal} {\bibinfo  {journal} {npj Quantum Information}\ }\textbf {\bibinfo
  {volume} {7}},\ \bibinfo {pages} {110} (\bibinfo {year} {2021})}\BibitemShut
  {NoStop}%
\bibitem [{\citenamefont {Belliardo}\ and\ \citenamefont
  {Giovannetti}(2021)}]{belliardo21}%
  \BibitemOpen
  \bibfield  {author} {\bibinfo {author} {\bibfnamefont {Federico}\
  \bibnamefont {Belliardo}}\ and\ \bibinfo {author} {\bibfnamefont {Vittorio}\
  \bibnamefont {Giovannetti}},\ }\bibfield  {title} {\enquote {\bibinfo {title}
  {Incompatibility in quantum parameter estimation},}\ }\href {\doibase
  10.1088/1367-2630/ac04ca} {\bibfield  {journal} {\bibinfo  {journal} {New
  Journal of Physics}\ }\textbf {\bibinfo {volume} {23}},\ \bibinfo {pages}
  {063055} (\bibinfo {year} {2021})}\BibitemShut {NoStop}%
\bibitem [{\citenamefont {Goodman}(2004)}]{goodman}%
  \BibitemOpen
  \bibfield  {author} {\bibinfo {author} {\bibfnamefont {Joseph~W.}\
  \bibnamefont {Goodman}},\ }\href@noop {} {\emph {\bibinfo {title}
  {Introduction to Fourier Optics}}}\ (\bibinfo  {publisher} {McGraw-Hill},\
  \bibinfo {address} {New York},\ \bibinfo {year} {2004})\BibitemShut {NoStop}%
\bibitem [{\citenamefont {Holland}\ \emph {et~al.}(1991)\citenamefont
  {Holland}, \citenamefont {Walls},\ and\ \citenamefont {Zoller}}]{holland91}%
  \BibitemOpen
  \bibfield  {author} {\bibinfo {author} {\bibfnamefont {M.~J.}\ \bibnamefont
  {Holland}}, \bibinfo {author} {\bibfnamefont {D.~F.}\ \bibnamefont {Walls}},
  \ and\ \bibinfo {author} {\bibfnamefont {P.}~\bibnamefont {Zoller}},\
  }\bibfield  {title} {\enquote {\bibinfo {title} {Quantum nondemolition
  measurements of photon number by atomic beam deflection},}\ }\href {\doibase
  10.1103/PhysRevLett.67.1716} {\bibfield  {journal} {\bibinfo  {journal}
  {Physical Review Letters}\ }\textbf {\bibinfo {volume} {67}},\ \bibinfo
  {pages} {1716–1719} (\bibinfo {year} {1991})}\BibitemShut {NoStop}%
\bibitem [{\citenamefont {Jacobs}\ \emph {et~al.}(1994)\citenamefont {Jacobs},
  \citenamefont {Tombesi}, \citenamefont {Collett},\ and\ \citenamefont
  {Walls}}]{jacobs94}%
  \BibitemOpen
  \bibfield  {author} {\bibinfo {author} {\bibfnamefont {K.}~\bibnamefont
  {Jacobs}}, \bibinfo {author} {\bibfnamefont {P.}~\bibnamefont {Tombesi}},
  \bibinfo {author} {\bibfnamefont {M.~J.}\ \bibnamefont {Collett}}, \ and\
  \bibinfo {author} {\bibfnamefont {D.~F.}\ \bibnamefont {Walls}},\ }\bibfield
  {title} {\enquote {\bibinfo {title} {Quantum-nondemolition measurement of
  photon number using radiation pressure},}\ }\href {\doibase
  10.1103/PhysRevA.49.1961} {\bibfield  {journal} {\bibinfo  {journal}
  {Physical Review A}\ }\textbf {\bibinfo {volume} {49}},\ \bibinfo {pages}
  {1961–1966} (\bibinfo {year} {1994})}\BibitemShut {NoStop}%
\bibitem [{\citenamefont {Tsang}(2010)}]{cqeo}%
  \BibitemOpen
  \bibfield  {author} {\bibinfo {author} {\bibfnamefont {Mankei}\ \bibnamefont
  {Tsang}},\ }\bibfield  {title} {\enquote {\bibinfo {title} {Cavity quantum
  electro-optics},}\ }\href {\doibase 10.1103/PhysRevA.81.063837} {\bibfield
  {journal} {\bibinfo  {journal} {Physical Review A}\ }\textbf {\bibinfo
  {volume} {81}},\ \bibinfo {pages} {063837} (\bibinfo {year}
  {2010})}\BibitemShut {NoStop}%
\end{thebibliography}

%

\clearpage

\appendix

\section{End Matter}
\subsection{\label{app_imaging}Diffraction-limited incoherent imaging}
With a diffraction-limited imaging system, the point-spread function
for the optical field can be modeled as a real function multiplied by
$\exp(i\eta)$ for some constant global phase $\eta \in \mb R$
\cite{goodman}. Assume a distribution of spatially incoherent sources
on the object plane. Then the mutual coherence matrix $\Upsilon$ with
respect to the wavepacket-mode basis on the image plane is real
\cite[Appendix~B]{tnl}. Assuming that the source distribution depends
on $\theta$, $\Upsilon(\theta)$ is real for all $\theta$.

Let the $m$-spatial-mode coherent state be
\begin{align}
\ket{\alpha} &= \sum_n e^{-\norm{\alpha}^2/2} \frac{\alpha^n}{\sqrt{n!}} \ket{n},
\end{align}
where $\alpha \equiv \mqty(\alpha_1 & \dots & \alpha_m)^\top$,
$n \equiv \mqty(n_1 & \dots & n_m)^\top$,
$\norm{\alpha}^2 \equiv \sum_\mu \abs{\alpha_\mu}^2$,
$\alpha^n \equiv \prod_\mu \alpha_\mu^{n_\mu}$,
$n! \equiv \prod_\mu n_\mu!$, and each $\ket{n}$ is a multimode Fock
state. Let the density operator be
\begin{align}
\rho(\theta) &= \int P(\alpha|\theta) \ket{\alpha}\bra{\alpha}d^{2m}\alpha,
\\
P(\alpha|\theta) &= \frac{1}{\det[\pi\Upsilon(\theta)]} 
\exp\Bk{-\alpha^\dagger\Upsilon(\theta)^{-1}\alpha},
\end{align}
where
$d^{2m}\alpha \equiv \prod_{\mu=1}^m
d(\Re\alpha_\mu)d(\Im\alpha_\mu)$.  Define an antiunitary conjugation
operator $K$ by $K \ket{n} = \ket{n}$.  Then
$K\ket{\alpha} = \ket{\alpha^*}$, where $*$ denotes the entry-wise
complex conjugate, leading to
$K\rho(\theta) K^\dagger = \int P(\alpha|\theta)
\ket{\alpha^*}\bra{\alpha^*} d^{2m}\alpha$.  Since
$\alpha^\dagger\Upsilon^{-1}\alpha =
\bk{\alpha^\dagger\Upsilon^{-1}\alpha }^* =
\alpha^{*\dagger}\Upsilon^{*-1} \alpha^*$ and
$\int f(\alpha) d^{2m}\alpha = \int f(\alpha) d^{2m}\bk{\alpha^*}$, a
change of variables $\beta = \alpha^*$ leads to
$K\rho(\theta) K^\dagger = \rho(\theta)$ if
$\Upsilon(\theta) = \Upsilon(\theta)^*$. The model then satisfies the
so-called global antiunitary symmetry \cite{miyazaki22}. By
\cite[Theorem~2]{miyazaki22},
$\Im \trace[S(\dot\phi) S(\dot\varphi)\rho] = 0$ for any pair of
scores $S(\dot\phi)$ and $S(\dot\varphi)$, and since $\delta^\Hel$ is
a real linear combination of the scores, $\Im \Gamma(\delta^\Hel) = 0$
results. It follows that $C(\delta^\Hel) = \Hel$, and by
Eq.~(\ref{bounds}), $\Hel = \HN = C(\delta^\Hel)$, which implies that
$\delta^\Hel$ is optimal for $\HN$.  These results hold for all
$\theta$.

Following Eq.~(\ref{effinf_thermal}), the optimal
  observables to measure are $X^{(M)}$ with
\begin{align}
X_j = \delta_j^\Hel + \bk{\beta_j-c_j} I = \sum_{\mu,\nu} D_{j\mu\nu} a_\mu^\dagger a_\nu
+ c_j' I
\end{align}
for some Hermitian $D_j \in \mb C^{m\times m}$ and $c_j' \in \mb R$
that can be computed after the acquisition step. A measurement of
$X_j$ can be implemented by measuring $X_j - c_j' I$ and then adding
$c_j'$ to the outcome, so the $c_j'I$ term will be ignored
hereafter. The $M$ IID copies correspond to $M$ temporal optical
modes.  Eqs.~(\ref{ImG})--(\ref{Ycompat}) with
$\Im\Gamma(\delta^\Hel) = 0$ imply that $X^{(M)} = Y^{(M)} = x^{(M)}$,
which are all compatible observables in the $M\to\infty$ limit.  As
the measurements of compatible observables are commonly called QND
\cite{braginsky}, the measurements of asymptotically compatible
observables are hereafter called asymptotically QND (AQND). The
Hamiltonian in Eqs.~(\ref{disp}) becomes
\begin{align}
H^{(M)} &= H_2^{(M)} = \gamma \sum_{j,\mu,\nu} D_{j\mu\nu} \bk{a_\mu^\dagger a_\nu}^{(M)} p_j'',
\end{align}
where $\bk{a_\mu^\dagger a_\nu}^{(M)}$ is in the collective form of
Eq.~(\ref{XM}).  To find a physical implementation of this
Hamiltonian, it is fruitful to diagonalize each $D_j$ as
$D_{j\mu\nu} = \sum_\xi \lambda_{j\xi} U_{j\mu\xi} U_{j\nu\xi}^*$ in
terms of the real eigenvalues $\BK{\lambda_{j\xi}:\xi \in \Omega}$ and
a unitary matrix $U_j$, leading to
\begin{align}
X_j &= \sum_\xi \lambda_{j\xi} a_{j\xi}^\dagger a_{j\xi},
\quad
a_{j\xi} \equiv \sum_\mu U_{j\mu\xi}^* a_\mu,
\\
H^{(M)} &= \gamma \sum_{j,\xi}\lambda_{j\xi} \bk{a_{j\xi}^\dagger a_{j\xi}}^{(M)}  p_j''.
\label{Hb}
\end{align}
It can in fact be shown that both $D_j$ and $U_j$ are real if
$\Upsilon$ is real for all $\theta$.  For a given $X_j$, the $U_j$
matrix defines a new basis of spatial modes such that $X_j$ is a
linear combination of the photon numbers
$\{a_{j\xi}^\dagger a_{j\xi}:\xi \in \Omega\}$ of the modes; I call
these modes the eigenmodes of $X_j$.
$(a_{j\xi}^\dagger a_{j\xi})^{(M)}$ is then the time-integrated photon
number of the $\xi$th eigenmode of $X_j$. To measure $X_j^{(M)}$,
Eq.~(\ref{Hb}) means that the $q_j''$ quadrature of the $j$th ancilla
should be displaced by the photon numbers
$\{(a_{j\xi}^\dagger a_{j\xi})^{(M)}:\xi \in \Omega\}$ of the
eigenmodes of $X_j$ with varying coupling coefficients
$\{\gamma \lambda_{j\xi}:\xi \in \Omega\}$. This type of interaction
has been extensively studied---both theoretically and
experimentally---in the context of QND photon-number measurements
\cite{grangier98}. The interaction may be implemented by the
cross-Kerr effect \cite{grangier98}, dispersive atomic coupling
\cite{holland91}, optomechanics \cite{jacobs94}, or electro-optics
\cite{cqeo}. For example, with the cross-Kerr effect, one can use a
strong coherent optical beam as an ancilla for which $q_j''$ and
$p_j''$ are the linearized phase and intensity quadratures,
respectively. Each ancilla phase quadrature $q_j''$ should then be
modulated by the photon numbers of the eigenmodes of $X_j$ as per
Eq.~(\ref{Hb}). Fig.~\ref{QNDSPADE}(a) sketches the experimental
scheme.

\fig{0.45}{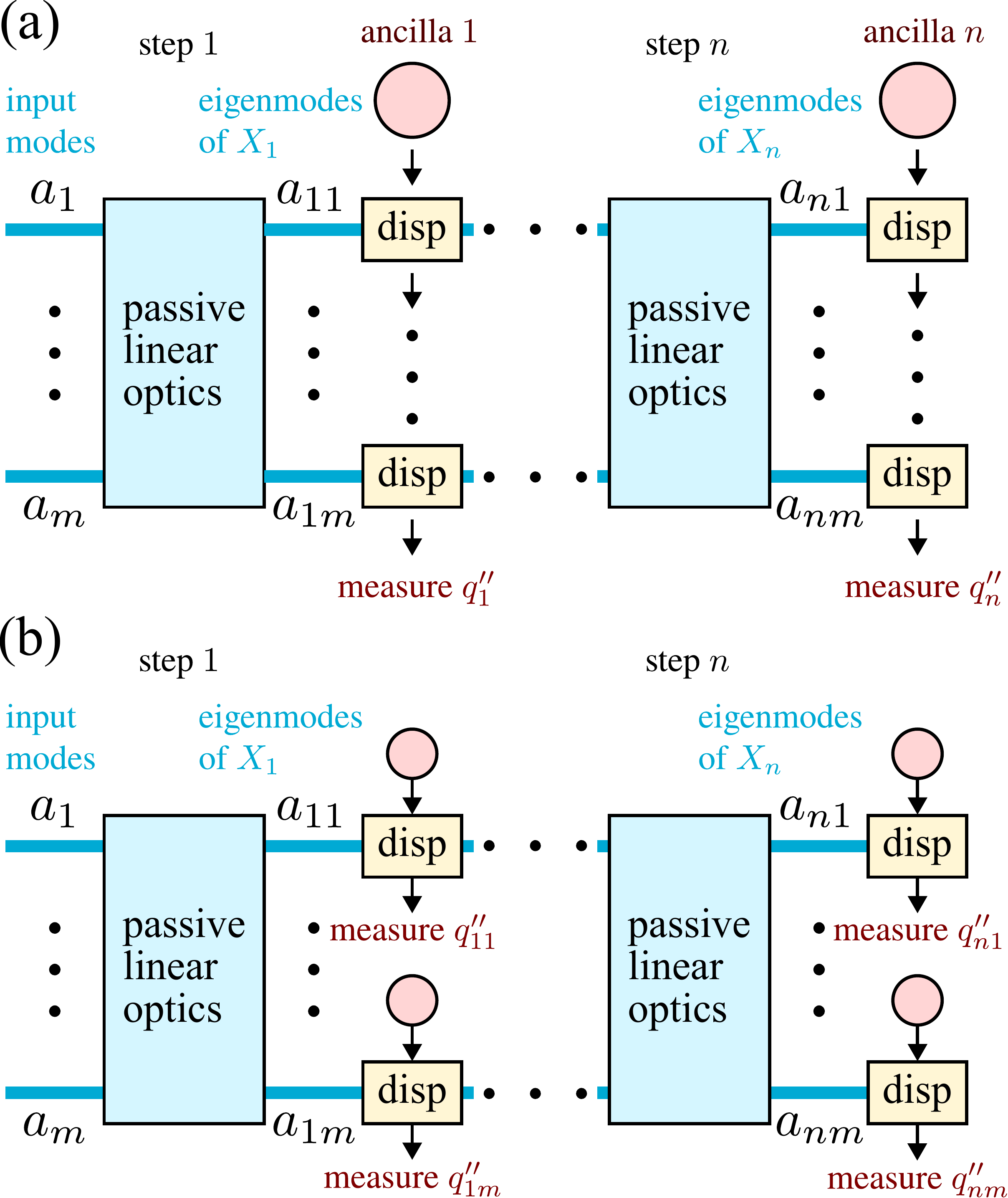}{(a) A scheme to measure $n$ observables
    $X^{(M)}$ for diffraction-limited incoherent imaging.  At the
    $j$th step with $j = 1,\dots,n$, the $\xi$th output channel of the
    passive linear optics corresponds to the $\xi$th eigenmode of
    $X_j$ with annihilation operator $a_{j\xi}$. In each ``disp'' box,
    the $q_j''$ quadrature of the $j$th ancilla is displaced by the
    photon number $(a_{j\xi}^\dagger a_{j\xi})^{(M)}$ of the $\xi$th
    eigenmode of $X_j$. A measurement of $q_j''$ implements an
    indirect measurement of $X_j^{(M)}$. (b) An alternative setup that
    measures each photon number $(a_{j\xi}^\dagger a_{j\xi})^{(M)}$
    with one separate ancilla with quadratures
    $(q_{j\xi}'',p_{j\xi}'')$.}

Since $U_j$ is real for all $j$, it can be shown that the set of all
photon numbers
$\{(a_{j\xi}^\dagger a_{j\xi})^{(M)}:j = 1,\dots,n,\xi \in \Omega\}$
are also asymptotically compatible. An equivalent setup is therefore
to measure each photon number $(a_{j\xi}^\dagger a_{j\xi})^{(M)}$ with
a separate AQND measurement, as illustrated by
Fig.~\ref{QNDSPADE}(b).  Since $X_j^{(M)}$ is a linear combination of
$\{(a_{j\xi}^\dagger a_{j\xi})^{(M)}:\xi \in \Omega\}$, the outcomes
of the $m$ AQND measurements at the $j$th step can be combined to
realize an effective measurement of $X_j^{(M)}$. Although this scheme
requires more ancillas ($nm$ ancilla modes in total), each AQND
measurement involves only one spatial mode and may be easier to
implement. The increased number of outcomes may also give more useful
information about the sources.

It is now clear how the schemes here generalize the previously
proposed spatial-mode-demultiplexing method
\cite{tnl,review_cp,lvovsky26}.  Whereas the previous proposals assume
that the photon numbers of each spatial-mode basis are measured
destructively, the schemes here perform AQND measurements, so that
multiple spatial-mode bases of the same optical field can be measured
with vanishing measurement-backaction noise.

\subsection{\label{app_spin}Nonparametric estimation 
of spin expected values}
Let $s \equiv \mqty(s_1 & s_2 &s_3)^\top$ be spin observables of a
$d$-level object satisfying
$[s_j,s_k] = i\sum_l \varepsilon_{jkl} s_l$ in terms of the
Levi-Civita symbol $\varepsilon$. Note that $d$ may be higher than
$2$. Let $\beta_j(\theta) = \trace[s_j \rho(\theta)]$ be the parameters
of interest. With the nonparametric model $\rho(\theta) = \theta$,
Eq.~(\ref{nonparam}) gives $X = s$.  Suppose that the preliminary
estimate gives $\trace(s_1\check\theta) = r > 0$ and
$\trace(s_2 \check\theta) = \trace(s_3\check\theta) = 0$.  Then
one should set
\begin{align}
Y &= \mqty(q\\ p\\ x) = \frac{1}{\sqrt{r}} \mqty(X_2\\ X_3\\ X_1)
= \frac{1}{\sqrt{r}} \mqty(s_2\\ s_3\\ s_1),
\\
Y^{(M)} &= \mqty(q^{(M)}\\ p^{(M)}\\ x^{(M)}) 
= \frac{1}{\sqrt{M r}} \mqty(J_2^{(M)}\\ J_3^{(M)}\\ J_1^{(M)}),
\end{align}
\begin{align}
J_j^{(M)} &\equiv 
\sum_{k=1}^M I^{\otimes(k-1)}\otimes s_j \otimes I^{\otimes(M-k)}.
\end{align}
For any other preliminary estimate, the relation between $X$ and $Y$
is more complicated but the idea is similar: $(q,p)$ should be set as
the spin components transverse to the estimated spin vector and $x$ as
the longitudinal spin component.  Write
\begin{align}
H^{(M)} &= \kappa
\bk{q^{(M)} p' - p^{(M)} q'} + \gamma x^{(M)} p''.
\end{align}
The exact equations of motion become
\begin{align}
\dv{q^{(M)}(t)}{t} &= -\kappa \frac{J_1^{(M)}(t)}{Mr} q'(t)
-\gamma \frac{J_3^{(M)}(t)}{Mr}  p''(t),
\label{spin1}
\\
\dv{p^{(M)}(t)}{t} &= -\kappa \frac{J_1^{(M)}(t)}{Mr} p'(t)
+\gamma \frac{J_2^{(M)}(t)}{Mr}  p''(t),
\label{spin2}
\\
\dv{x^{(M)}(t)}{t} &= \kappa \frac{J_3^{(M)}(t)}{Mr} p'(t)
+\kappa \frac{J_2^{(M)}(t)}{Mr}  q'(t)
\label{spin3}
\end{align}
for the collective observables and
\begin{align}
\dv{q'(t)}{t} &= \kappa q^{(M)}(t),
&
\dv{p'(t)}{t} &= \kappa p^{(M)}(t),
\\
\dv{q''(t)}{t} &= \gamma x^{(M)}(t),
&
\dv{p''(t)}{t} &= 0
\end{align}
for the ancilla quadratures. Approximating the intensive spin
observables as
$J_j^{(M)}(t)/M = \trace(s_j\check\theta) I + O(\epsilon)$ in terms of
their initial expected values $\trace(s_j\check\theta)$ and some
small $\epsilon > 0$, Eqs.~(\ref{spin1})--(\ref{spin3}) can be
linearized to become
\begin{align}
\dv{q^{(M)}(t)}{t} &= -\kappa q'(t) + O(\epsilon),
\\
\dv{p^{(M)}(t)}{t} &= -\kappa p'(t) + O(\epsilon),
&
\dv{x^{(M)}(t)}{t} &= O(\epsilon).
\end{align}
These are the desired equations if one can ignore the $O(\epsilon)$
terms, the magnitude of which is determined by the error in
$\check\theta$ from the acquisition step as well as the
$O(1/\sqrt{M})$ quantum uncertainties in $J^{(M)}(t)/M$. Recall that
the covariance matrix of $X^{(M)}$ is
$\Re\Gamma[\delta(\theta)] \approx \Re \Gamma[\delta(\check\theta)]$,
so the covariance matrix of
$Y^{(M)} = \mqty(q^{(M)} & p^{(M)} &x^{(M)})^\top$ is
$L^{-1}(\Re \Gamma) L^{-\top}$, which does not scale with $M$. As long
as the $O(\epsilon)$ terms are much smaller than the normalized
uncertainties in $Y^{(M)}$ indicated by the covariance matrix, the
former can be ignored, and the linearized equations can be used to
propagate both the mean and the covariance matrix to show that the
spin statistics are transduced to the ancilla quadratures. This result
is the same as that derived from the QCLT-based approximation, as it
should be. Similar linearized equations of motion are widely used
  in the study of interactions between light and atomic ensembles
  \cite{hammerer10}.

\clearpage

\section*{\label{app_dyne}Supplemental material for \\
``Approaching the ultimate limit of quantum multiparameter estimation
by many-body physics''}

Let the quadratures of the $r$ additional ancilla modes for the
general-dyne measurement be
$\bigoplus_{j=1}^r \mqty(\tilde q_j& \tilde p_j)^\top$.  Assume that
their state is a zero-mean Gaussian state. Define 
\begin{align}
Z &\equiv \Bk{\bigoplus_{j=1}^r \mqty(\tilde q_j\\ -\tilde p_j)} \oplus
0,
\end{align}
where the last $0$ denotes the zero column vector with $n-2r$
entries. $Y^{(\infty)} + Z$ is then a vector of $n$ compatible
observables. The general-dyne measurement depicted in
Fig.~\ref{manybody_scheme3}(a) essentially measures the first $2r$
entries of $Y^{(\infty)} + Z$, while the homodyne measurement depicted
in Fig.~\ref{manybody_scheme3}(b) measures the rest. After multiplying
the column vector of outcomes by the $L$ matrix in Eq.~(\ref{L}), the
asymptotic normalized error becomes
\begin{align}
N\error_N \to \trace\bk{W\Re\Gamma} + \trace\bk{W L G L^\top},
\label{dyne_error}
\end{align}
where $\Re\Gamma$ is the covariance matrix of $X = LY$ and $G$ is the
covariance matrix of $Z$ (both in the sense of Wigner). For this error
to achieve the Holevo--Nagaoka bound, one should set
\cite[Sec.~3.2]{demkowicz20}
\begin{align}
\sqrt{W}L G L^\top \sqrt{W}&= \abs{\sqrt{W}\Im\Gamma\sqrt{W}},
\label{G}
\end{align}
so that the last term of Eq.~(\ref{dyne_error}) matches the last term
of the Holevo--Nagaoka bound in Eq.~(\ref{HN2}). Since $L$ is
invertible and $W$ is assumed to be strictly positive-definite, a
finite solution for $G$ always exists. Moreover, this $G$ is physical,
as demonstrated by Theorem~\ref{thm_G} in the following.  This theorem
is stated without proof in \cite[Sec.~3.2]{demkowicz20}; I provide an
explicit proof here for completeness. I need a simple lemma first.
\begin{lemma}
\label{lem_pos}
Let $A$ be a real and antisymmetric matrix and
$\abs{A} \equiv \sqrt{A^\top A}$.  Then the complex matrix
$\abs{A} + i A$ is positive-semidefinite.
\end{lemma}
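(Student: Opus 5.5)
The plan is to use the real normal form of an antisymmetric matrix, exactly as in Eq.~(\ref{ImG}), to block-diagonalize both $\abs{A}$ and $iA$ at once. This reduces the claim to $2\times 2$ blocks that can be checked by hand. By \cite[Corollary~2.5.11]{horn} there is a real orthogonal $O$ such that
\begin{align}
O^\top A O = \Bk{\bigoplus_{j=1}^r \mqty(0 & \nu_j\\ -\nu_j & 0)} \oplus 0 \equiv B,
\end{align}
with every $\nu_j > 0$. Positive-semidefiniteness is preserved under the congruence $M \mapsto O^\top M O$, because $O$ is real and invertible. It therefore suffices to show that $O^\top(\abs{A} + iA)O$ is positive-semidefinite.

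First I compute $O^\top \abs{A} O$. Since $A^\top A = O B^\top B O^\top$ and
\begin{align}
B^\top B = \Bk{\bigoplus_{j} \nu_j^2 I_2}\oplus 0,
\end{align}
which is positive-semidefinite, the unique positive square root is
\begin{align}
\abs{A} = O\Bk{\bigoplus_j \nu_j I_2 \oplus 0}O^\top .
\end{align}
This follows from uniqueness of the positive square root and the fact that $O(\cdot)O^\top$ commutes with taking square roots. Hence
\begin{align}
O^\top(\abs{A}+iA)O = \Bk{\bigoplus_j \nu_j\mqty(1 & i\\ -i & 1)}\oplus 0 .
\end{align}

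Each $2\times 2$ block $\mqty(1 & i\\ -i & 1)$ is Hermitian with trace $2$ and determinant $1 - 1 = 0$. Its eigenvalues are therefore $2$ and $0$; equivalently, it equals $v v^\dagger$ with $v = (1, -i)^\top$. Multiplying by $\nu_j > 0$ keeps each block positive-semidefinite, and a direct sum of positive-semidefinite blocks with a zero block is positive-semidefinite. This completes the argument.

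There is no real obstacle here. The only step that needs care is justifying that $\abs{A}$ transforms covariantly under the orthogonal congruence, i.e.\ that $\sqrt{O C O^\top} = O\sqrt{C}\,O^\top$ for positive-semidefinite $C$. This follows because the right-hand side is positive-semidefinite and squares to $O C O^\top$.
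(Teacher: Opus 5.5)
Your proof is correct and follows the paper's own argument: bring $A$ to real antisymmetric normal form with an orthogonal matrix, observe that $\abs{A}$ is diagonalized in the same basis, and reduce to the positive-semidefinite $2\times 2$ block $\mqty(1 & i\\ -i & 1)$. Your explicit justification that $\sqrt{O C O^\top} = O\sqrt{C}\,O^\top$, via uniqueness of the positive square root, fills in a step the paper leaves implicit.
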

\begin{proof}
Since $A$ is real and antisymmetric, there exists a real orthogonal
matrix $Q$ and positive constants $\{\xi_j\}$ such that
\begin{align}
A &= Q \BK{\Bk{\bigoplus_{j} \mqty(0 & \xi_j\\ -\xi_j & 0)} \oplus 0}Q^\top,
\\
\abs{A}&= Q \BK{\Bk{\bigoplus_j \mqty(\xi_j & 0\\ 0 & \xi_j)} \oplus 0}Q^\top,
\\
\abs{A} + i A &= 
Q \BK{\Bk{\bigoplus_j \xi_j B} \oplus 0}Q^\top,
\label{Acomplex}
\\
B &\equiv \mqty(1 & i\\ -i & 1).
\end{align}
Since $B \ge 0$, Eq.~(\ref{Acomplex}) is positive-semidefinite.
\end{proof}
\begin{theorem}
\label{thm_G}
Define 
\begin{align}
J &\equiv  \Bk{\bigoplus_{j=1}^r \mqty(0 & 1 \\ -1 & 0) }\oplus 0,
\end{align}
such that
\begin{align}
\Bk{Z_j,Z_k} &= -i J_{jk}.
\end{align}
The covariance matrix $G$ of $Z$ according to Eq.~(\ref{G}) is
physical in the sense of satsifying the uncertainty principle
\cite[Theorem~5.5.1]{holevo_aspect}
\begin{align}
G + \frac{i}{2} J  &\ge 0.
\label{unc}
\end{align}
\end{theorem}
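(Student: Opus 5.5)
The plan is to reduce the uncertainty relation (\ref{unc}) to Lemma~\ref{lem_pos} by a single real congruence transformation. There are three steps: rewrite $\Im\Gamma$ in terms of $L$ and $J$, conjugate $G+\frac{i}{2}J$ by $\sqrt{W}L$, and pull the positivity back.

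\emph{Step 1: express $\Im\Gamma(\delta)$ through $L$ and $J$.} By Eq.~(\ref{Ycompat}), the asymptotic quadratures satisfy $[Y_j^{(\infty)},Y_k^{(\infty)}] = iJ_{jk}I$, with the ordering $(q_1,p_1,\dots,q_r,p_r,x_1,\dots)$. Since $X^{(\infty)} = LY^{(\infty)}$ with $L$ real, it follows that
\begin{align}
[X_j^{(\infty)},X_k^{(\infty)}] = i\bk{LJL^\top}_{jk} I .
\end{align}
Comparing with Eq.~(\ref{ergodic}) gives $\Im\Gamma(\delta) = \frac12 LJL^\top$. This identity is really the content of the construction (\ref{ImG})--(\ref{L}). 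For concreteness, $L$ can be taken as $O$ times the block-diagonal rescaling $\bigoplus_j \sqrt{2\nu_j}\,I_2 \oplus I$, and one checks the identity directly from Eq.~(\ref{ImG}). I expect this bookkeeping step, confirming that the $J$ of the theorem is exactly the commutator matrix of $Y$ with the factor of $2$ and the signs matching, to be the only delicate point.

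\emph{Step 2: congruence.} Let $T \equiv \sqrt{W}L$, which is real and invertible because $W>0$ and $L$ is invertible. Set $A \equiv \sqrt{W}\Im\Gamma\sqrt{W}$, which is real and antisymmetric. Eq.~(\ref{G}) then reads $TGT^\top = \abs{A}$. Step~1 gives $T\bk{\frac{i}{2}J}T^\top = i\sqrt{W}\bk{\frac12 LJL^\top}\sqrt{W} = iA$. Hence
\begin{align}
T\bk{G + \tfrac{i}{2}J}T^\top = \abs{A} + iA .
\end{align}
For real $A$, the absolute value $\abs{A}=\sqrt{A^\dagger A}$ used in Eq.~(\ref{HN2}) coincides with $\sqrt{A^\top A}$ in Lemma~\ref{lem_pos}. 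The lemma therefore shows that the right-hand side is positive-semidefinite.

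\emph{Step 3: pull back.} Because $T$ is real, $T^\top = T^\dagger$, so $T^{-1}(\cdot)T^{-\top}$ is a congruence by an invertible matrix. Such a congruence preserves positive-semidefiniteness: for any $v\in\mb C^n$, $v^\dagger T^{-1}MT^{-\dagger}v = w^\dagger M w$ with $w = T^{-\dagger}v$. Applying this to $M=\abs{A}+iA$ gives $G + \frac{i}{2}J = T^{-1}\bk{\abs{A}+iA}T^{-\top} \ge 0$, which is Eq.~(\ref{unc}). As a by-product, $G = T^{-1}\abs{A}T^{-\top}$ is real symmetric and positive-semidefinite, as a Wigner covariance matrix must be.
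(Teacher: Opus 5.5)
Your proof is correct and follows the paper's argument: both derive $\Im\Gamma=\tfrac12 LJL^\top$ from the asymptotic commutation relations of $Y$, apply a congruence by the real invertible matrix $\sqrt{W}L$ to turn $G+\tfrac{i}{2}J$ into $\abs{A}+iA$, and conclude with Lemma~\ref{lem_pos}. The only cosmetic difference is that the paper takes $A=\sqrt{W}LJL^\top\sqrt{W}$, which is twice your $\sqrt{W}\Im\Gamma\sqrt{W}$; this is immaterial because positivity of $\abs{A}+iA$ is unchanged under positive rescaling of $A$.
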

\begin{proof}
Eqs.~(\ref{L}) and (\ref{Ycompat}) imply that
\begin{align}
\trace\bk{\Bk{Y_j,Y_k}\rho} &= i J_{jk}=
2i \bk{L^{-1} \Im\Gamma L^{-\top} }_{jk},
\label{Ycom}
\end{align}
leading to 
\begin{align}
\sqrt{W}L G L^\top \sqrt{W}= \frac{1}{2} \abs{\sqrt{W}L J L^\top \sqrt{W}},
\\
G = \frac{1}{2} \bk{\sqrt{W}L}^{-1} \abs{\sqrt{W}L J L^\top \sqrt{W}}
\bk{\sqrt{W}L}^{-\top}.
\end{align}
Then Eq.~(\ref{unc}) holds if and only if 
\begin{align}
\abs{A}+ i A &\ge 0,
&
A &= \sqrt{W}L J L^\top \sqrt{W}.
\label{Apos}
\end{align}
Since $A$ here is real and antisymmetric, Lemma~\ref{lem_pos} implies
Eq.~(\ref{Apos}) and thus Eq.~(\ref{unc}).
\end{proof}

In general, one needs a preliminary estimate of $\Im\Gamma$ to find
$L$ for the transformation $X = LY$ as well as $G$ to prepare the
general-dyne ancilla state.  In the special case $W = I$, however, $G$
is simple.
\begin{theorem}
\label{thm_G2}
  Assuming $W = I$, Eq.~(\ref{G}) holds if the general-dyne
  ancillas are in the vacuum state, viz.,
\begin{align}
G &= 
\Bk{\bigoplus_{j=1}^r \mqty(1/2 & 0\\ 0 & 1/2) }\oplus 0.
\label{vacuum}
\end{align}
In other words, the general-dyne measurement is simply heterodyne.
\end{theorem}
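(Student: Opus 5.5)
With $W=I$, Eq.~(\ref{G}) reduces to $L G L^\top = \abs{\Im\Gamma}$, so the plan is to compute both sides explicitly for a canonical choice of $L$ built from Eq.~(\ref{ImG}). The only input I need is Eq.~(\ref{Ycom}) from the proof of Theorem~\ref{thm_G}, namely $\Im\Gamma = \tfrac12 L J L^\top$. This says how $L$ must act on the antisymmetric part.

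\emph{Step 1: choose $L$.} Let $O$ be the orthogonal matrix of Eq.~(\ref{ImG}). Define
\begin{align}
L = O D,\qquad D = \Bk{\bigoplus_{j=1}^r \sqrt{2\nu_j}\, I_2}\oplus D_0,
\end{align}
where $I_2$ is the $2\times2$ identity and $D_0$ is any invertible real $(n-2r)\times(n-2r)$ matrix. Then $D J D = \Bk{\bigoplus_j 2\nu_j \mqty(0&1\\-1&0)}\oplus 0$. Hence $\tfrac12 L J L^\top = O\bigl[\bigoplus_j \mqty(0&\nu_j\\-\nu_j&0)\oplus 0\bigr]O^\top = \Im\Gamma$, so this $L$ satisfies Eqs.~(\ref{L}) and (\ref{Ycompat}). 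The $D_0$ block is irrelevant below, because $G$ vanishes on the classical entries $x$.

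\emph{Step 2: compute $\abs{\Im\Gamma}$.} As in the proof of Lemma~\ref{lem_pos}, with $A=\Im\Gamma$ and $Q=O$, one gets $\abs{\Im\Gamma} = O\Bk{\bigoplus_j \nu_j I_2 \oplus 0}O^\top$. This uses $(O\,\cdot\,O^\top)^\top(O\,\cdot\,O^\top) = O(\cdot)^\top(\cdot)O^\top$ and the fact that each block $\mqty(0&\nu_j\\-\nu_j&0)^\top\mqty(0&\nu_j\\-\nu_j&0) = \nu_j^2 I_2$.

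\emph{Step 3: compare with vacuum $G$.} Insert $G$ from Eq.~(\ref{vacuum}):
\begin{align}
L G L^\top = O D G D O^\top = O\Bk{\bigoplus_j \tfrac12 (2\nu_j) I_2 \oplus 0}O^\top .
\end{align}
This equals $\abs{\Im\Gamma}$ by Step~2, which establishes Eq.~(\ref{G}) for $W=I$.

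\emph{Main obstacle: nonuniqueness of $L$.} The statement implicitly assumes this natural normalization, since Eq.~(\ref{Ycompat}) fixes $L$ only up to right multiplication by a matrix $S$ preserving $J$ (symplectic on the $2r$ block, arbitrary coupling into the $x$ block). For a general $S$ the identity fails; for example, a squeezing $S$ would call for a squeezed ancilla instead of vacuum. I would therefore state explicitly that $L$ is taken as in Step~1. I would then remark that $L$ may be further right-multiplied by any orthogonal symplectic (passive) $S$ on the $q,p$ block, optionally with a mixing into the $x$ block that does not touch the $2r$ block. Such $S$ satisfy $S G S^\top = G$ for the vacuum $G$, so the conclusion is unaffected.
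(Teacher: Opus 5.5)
Your proof is correct and is essentially the paper's argument: your $L = OD$ coincides with the paper's $L = \sqrt{2}\,OE$ (taking $F = D_0/\sqrt{2}$), and both proofs compute $\abs{\Im\Gamma} = O\bigl[\bigoplus_j \nu_j I_2 \oplus 0\bigr]O^\top$ and match it against $LGL^\top$ for the vacuum $G$. Your remark that Eq.~(\ref{Ycom}) determines $L$ only up to right multiplication by a $J$-preserving $S$ is a valid sharpening of the paper's claim that Eqs.~(\ref{Ycom}) and (\ref{ImG2}) ``lead to'' $L = \sqrt{2}\,OE$: the vacuum conclusion needs this canonical $L$, up to passive $S$ on the $q,p$ block.
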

\begin{proof}
Define
\begin{align}
E &\equiv  
\Bk{\bigoplus_{j=1}^r \mqty(\sqrt{\nu_j} & 0\\ 0 & \sqrt{\nu_j}) }\oplus F,
\label{E}
\end{align}
where $F$ is any invertible $(n-2r)\times(n-2r)$ matrix, so that
$EJE^\top$ coincides with the right-hand side of Eq.~(\ref{ImG}).
Eq.~(\ref{ImG}) becomes
\begin{align}
\Im\Gamma &= O E J E^\top O^\top.
\label{ImG2}
\end{align}
The right-hand side of Eq.~(\ref{G}) becomes
\begin{align}
\abs{\Im\Gamma} &= O \abs{EJE^\top} O^\top,
\label{absImG}
\\
\abs{EJE^\top}
&= \Bk{\bigoplus_{j=1}^r \mqty(\nu_j & 0\\ 0 & \nu_j) }\oplus 0.
\label{absEJE}
\end{align}
On the other hand, Eqs.~(\ref{Ycom})  and (\ref{ImG2}) lead to
\begin{align}
L &= \sqrt{2} O E.
\end{align}
The left-hand side of Eq.~(\ref{G}) hence becomes
\begin{align}
L G L^\top &= 2 O E G E^\top O^\top
= O \abs{E J E^\top } O^\top = \abs{\Im\Gamma},
\end{align}
where the second equality comes from
$2 E G E^\top = \abs{EJE^\top}$ via Eqs.~(\ref{vacuum}),
(\ref{E}), and (\ref{absEJE}) and the last equality comes from
Eq.~(\ref{absImG}).
\end{proof}

\end{document}